\newtheorem{theorem}{Theorem}
\newtheorem{definition}[theorem]{Definition}
\newtheorem{lemma}[theorem]{Lemma}
\newtheorem{proposition}[theorem]{Proposition}
\newtheorem{remark}[theorem]{Remark}
\newcommand{\Z}{{\mathbb Z}}
\newcommand{\R}{{\mathbb R}}
\newcommand{\C}{{\mathbb C}}
\newcommand{\N}{{\mathbb N}}
\newcommand{\spann}{{\mbox{span}}}
\newcommand{\idty}{{\mathbbm{1}}}
\numberwithin{equation}{section}
\numberwithin{theorem}{section} 
\numberwithin{footnote}{section}
\pgfplotsset{compat=1.10}
\begin{document}

\definecolor{ududff}{rgb}{0.30196078431372547,0.30196078431372547,1.}

\title[XXZ systems on graphs]{Droplet states in quantum XXZ spin systems\\ on general graphs}

\author[C. Fischbacher]{Christoph Fischbacher$^1$}
\address{$^1$ Department of Mathematics\\
University of Alabama at Birmingham\\
Birmingham, AL 35294, USA}
\email{cfischb@uab.edu}

\author[G. Stolz]{G\"unter Stolz$^2$}
\address{$^2$ Department of Mathematics\\
University of Alabama at Birmingham\\
Birmingham, AL 35294, USA}
\email{stolz@uab.edu}

\date{}

%
\begin{abstract}

We study XXZ spin systems on general graphs. In particular, we describe the formation of droplet states near the bottom of the spectrum in the Ising phase of the model, where the Z-term dominates the XX-term. As key tools we use  particle number conservation of XXZ systems and symmetric products of graphs with their associated adjacency matrices and Laplacians. Of particular interest to us are strips and multi-dimensional Euclidean lattices, for which we discuss the existence of spectral gaps above the droplet regime. We also prove a Combes-Thomas bound which shows that the eigenstates in the droplet regime are exponentially small perturbations of strict (classical) droplets.

\end{abstract}

\maketitle

\tableofcontents

\section{Introduction}

The infinite XXZ quantum spin chain is, at least in principle, fully diagonalizable by the Bethe ansatz, e.g.\ \cite{Bethe,Thomas1977,BabbittThomas1977,BabbittThomas1978,BabbittThomas1979,BabbittGutkin1990,Gutkin1993,Gutkin2000,Borodinetal}. However, having explicit expressions available for all its generalized eigenfunctions does not readily expose the quantitative information needed to study further questions related to the XXZ chain such as finite volume effects or perturbative situations (e.g., the sensitivity of the model to exterior fields). It also does not reflect the behavior of important physical quantities such as quantum dynamics or entanglement properties of eigenstates in a straightforward way.

More detailed information about the bottom of the spectrum of the XXZ chain has been found in its Ising phase, where the Z-term dominates the XX-term of the Hamiltonian. Not only does this phase lead to a non-degenerate gapped ground state \cite{KN1,KN2}, but the first band above the ground state gap is found to be generated by states consisting of small perturbations of {\it droplets}, i.e., a single cluster of spins of a fixed orientation \cite{Starr,NS,FS}. In finite volume these droplets appear in connection with kink boundary conditions \cite{NSS}.

This droplet regime is quite stable under perturbations, which has recently allowed to prove that exposure of the Ising phase XXZ chain to a random field in the Z direction leads to a broad range of localization properties within the droplet spectrum \cite{BW1,EKS1,EKS2,BW2}, thus providing the first mathematically rigorous results towards the existence of a physically expected many-body localized phase in the disordered XXZ chain.

Our main goal here is to demonstrate that the occurrence of the droplet phase is not limited to the special case of the XXZ chain, but appears much more generally for {\it quasi-one-dimensional} XXZ systems in the Ising phase. As our set-up we will consider the XXZ model on a general undirected countable graph ${\mathcal G}$. This includes, in particular, the quasi-one-dimensional case of strips of arbitrary width, which will be contrasted with XXZ systems on multi-dimensional lattices (where we will focus on dimension two). In particular, the general model considered here is not exactly solvable, a property which is not needed for our methods.

The first crucial feature to be exploited is that XXZ models are particle number preserving, meaning that the subspaces generated by states with an arbitrary but fixed number $N$ of (say) down-spins are invariant under the XXZ Hamiltonian. The restrictions of the Hamiltonian to the $N$-particle subspaces turn out to be discrete Schr\"odinger-type operators on the {\it $N$-th symmetric product graph} ${\mathcal G}_N$ of ${\mathcal G}$. This concept and some of its graph theoretic implications have been used in the literature on spin systems before, at least for the isotropic XXX (or Heisenberg) model \cite{AGRR,O2,O1} (and \cite{BarghiPonomarenko2009,AIP} provide counterexamples to an iso-spectrality problem posed in \cite{AGRR}). 

We feel that this connection between quantum spin systems and graph theory still has much potential remaining to provide further insights, some of which we discuss here. Interestingly (but not too surprisingly), symmetric product graphs have been introduced and studied in the graph theory literature, seemingly independently, under several other names such as $N$-subgraph graphs, $N$-tuple vertex graphs and $N$-token graphs, e.g.\ \cite{Johns, Alavi1, Zhuetal, Alavi2, Fabila, Carballosa, Rivera}.

Physically, the reduced $N$-particle operators describe hard core bosons with attractive next-neighbor interaction. In the Ising phase these interactions (resulting from the $Z$-term) dominate the hopping terms (resulting from the $XX$-term). This is the second crucial feature exploited in our work --- and in earlier works on the special case of the chain --- as it makes the particles ``sticky'' at low energies and thus leads to the formation of spin droplets.

In Section~\ref{sec:XXZ} we start by introducing some graph theoretic concepts, including symmetric products of graphs and a known formula for the geodesic distance on these products (for which we provide a proof in Appendix~\ref{AppA}). Then we introduce the XXZ system on a general connected undirected graph $\mathcal G$ and show that its restrictions to the $N$-particle subspaces are given by Schr\"odinger-type operators on the symmetric products ${\mathcal G}_N$ of $\mathcal G$, with potential given by the degree function on the graph ${\mathcal G}_N$. The latter is given for any vertex of ${\mathcal G}_N$ by the {\it edge surface measure} of the corresponding $N$-particle configuration in $\mathcal G$. This leads to a close connection between the Ising phase of XXZ systems and the edge-isoperimetric problem on the graph (Section~\ref{sec:droplets}). In this view droplets are given the natural geometric interpretation of solutions of the isoperimetric problem, i.e., configurations of minimal surface measure at given particle number (or volume). This leads us to the definition of {\it droplet spectrum} for a general XXZ system in the Ising phase.

The remainder of the paper is devoted to two questions: (i) Is the droplet spectrum (in sufficiently strong Ising phase) separated by a gap from higher spectral bands of the XXZ system? (ii) In what sense are the eigenstates to energies in the droplet spectrum approximated by droplet states, i.e., by spin product states with a single connected cluster of (in our choice) down-spins?

In Section~\ref{sec:abstractgap} we develop a general approach to the proof of the existence of spectral gaps for Schr\"odinger type operators on graphs, tailored to our application. It will allow to treat the hopping terms in the $N$-particle operators as perturbations of the potential term. We will exploit that, while the graphs ${\mathcal G}_N$ have ``bulk degree'' growing linearly in $N$, the local degree of vertices in ${\mathcal G}_N$ corresponding to droplet configurations is smaller than the bulk degree, growing as the surface area rather than the volume of droplets.

This means, essentially, that the droplet regime at the bottom of the spectrum can be expected to be uniform in the particle number $N$ for graphs $\mathcal{G}$ which are ``one-dimensional'' in the sense that droplets have surface area uniformly bounded in their volume. For higher-dimensional graphs the surface area grows with the volume of a droplet, leading to the conclusion that the droplet regime at the bottom of the spectrum only reflects small particle numbers.

In Section~\ref{sec:examples} we discuss this in some detail for two prototypical examples, strips of arbitrary (but fixed) width $M$, as well as the two-dimensional Euclidean lattice $\Z^2$. Here we also include some new results on the existence of higher order spectral gaps for the case of a chain, i.e., the strip with $M=1$. For completeness of presentation, we include some results on the solution of the edge-isoperimetric problem for these examples in Appendix \ref{sec:AppB}.

We limit our discussion of special cases to these two examples, as this should suffice to point out the principal difference between one-dimensional and multi-dimensional systems. In principle, other examples of quasi-one-dimensional graphs, such as ``tubes'' or graphs found from the chain or strip by adding local graph decorations, or graphs given by higher-dimensional Euclidean lattices, could be analyzed and should show similar features. However, the details would get more tedious due to the subtleties of solving the corresponding edge-isoperimetric problems.

In many ways our main results are Theorem~\ref{thm:CTgen} in Section~\ref{sec:CTbound} and the subsequent discussion of decay of eigenstates and spectral projections for energies in the droplet spectrum in Section~\ref{thm:CTgen}, as this shows how well these states are approximated by strict droplet configurations. In fact, the approximation turns out to be exponentially close in the graph distance on the symmetric product graph ${\mathcal G}_N$. Here the central Theorem~\ref{thm:CTgen} takes the form of a Combes-Thomas-type bound on the Green function with respect to the graph distance on ${\mathcal G}_N$, generalizing an argument for the chain in \cite{EKS1} (see also \cite{BW1} for a related result).

While we focus on  the droplet regime, we mention that all our results can be used to study the structure of the spectrum and eigenstates of XXZ systems at higher energies, where the eigenstates correspond to ``multi-droplets'', i.e., are close to spin product states with multiple connected components of down-spins.

Also, by not using any of the specific properties behind the exact solvability of the XXZ chain via the Bethe ansatz, our results allow the addition of positive exterior magnetic fields in the Z-direction to the model and work equally well in finite and infinite volume. This, in particular, will be exploited in planned future work such as the further study of localization properties of XXZ systems in random fields.

\section{The XXZ Hamiltonian on general graphs} \label{sec:XXZ}

\subsection{Basic graph theoretic concepts and symmetric products}

In the following, let $\mathcal{G}=(\mathcal{V,E})$ denote an undirected non-trivial graph with countable (finite or infinite) vertex set $\mathcal{V}$ and edge set $\mathcal{E}\subset\{\{x,y\}: x,y\in \mathcal{V}, x\neq y\}$. We assume that $\mathcal{G}$ is connected, so that the graph distance $d(\cdot,\cdot)$ on $\mathcal{G}$, i.e., the smallest number of edges connecting two vertices, is an integer-valued metric on $\mathcal{V}$. We also assume that $\mathcal{G}$ has bounded degree, i.e., that $d:=\sup_{x\in \mathcal V} d(x)<\infty$, where $d(x) := |\{y\in \mathcal{V}: \{x,y\} \in \mathcal{E}\}| = |\{y\in \mathcal{V}: d(x,y)=1\}|$ is the local degree of $x\in {\mathcal V}$. Here and below $|\cdot|$ denotes cardinality. 

\begin{definition} \normalfont For every finite $X\subset\mathcal{V}$, let us introduce the following graph theoretic concepts:
\begin{itemize}
\item The {\it total degree} $d_{tot}(X)$ given by
\begin{equation} 
d_{tot}(X)=\sum_{x\in X}d(x)\:.
\end{equation}
\item The {\it edge boundary} $\partial X$ of $X$ given by
\begin{equation} \label{eq:edgeboundary}
\partial X=\{\{x,y\}\in\mathcal{E}: x\in X, y\notin X\}\:.
\end{equation}
The {\it surface measure} $S(X)$ is then given by $S(X)=|\partial X|$. 
\item The {\it interaction potential} $W(X)$ given 
by the number of edges between elements of $X$,
\begin{equation}
W(X) = |\{\{x,y\} \subset X: \{x,y\} \in \mathcal{E}\}|.
\end{equation} 
Alternatively, explaining the use of the term interaction potential, one can write
\begin{equation} 
W(X)=\sum_{\{x,y\} \subset X}Q(x,y)\:,
\end{equation}
where $Q$ is the attractive next-neighbor pair interaction
\begin{equation}
Q(x,y)=\begin{cases} 1, \quad&\text{if}\quad \{x,y\}\in \mathcal{E},\\
0,\quad&\text{else.}
\end{cases}
\end{equation}
\end{itemize}
\end{definition}

Note that for finite $X\subset \mathcal{V}$ the quantities $d_{tot}(X)$, $S(X)$ and $W(X)$ are all finite. Moreover, they satisfy the following simple relation (which is well known, at least for regular graphs, e.g.\ \cite{Bezrukov}):
\begin{lemma} \label{lemma:graphhelp}
For $S(X)$, $d_{tot}(X)$ and $W(X)$ as defined above we have 
\begin{equation} \label{eq:georel}
S(X) = d_{tot}(X) - 2W(X)\:.
\end{equation}
\end{lemma}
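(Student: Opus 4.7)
The plan is to prove the identity by a simple double-counting argument on the edges incident to vertices of $X$. Specifically, I would start from the definition
\begin{equation*}
d_{tot}(X) = \sum_{x \in X} d(x) = \sum_{x \in X} |\{y \in \mathcal{V} : \{x,y\} \in \mathcal{E}\}|,
\end{equation*}
and reinterpret this as counting pairs $(x,e)$ with $x \in X$, $e \in \mathcal{E}$, and $x \in e$.

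Next I would split this count according to whether the other endpoint of $e$ lies in $X$ or not. Every edge $e = \{x,y\}$ with both $x,y \in X$ contributes $2$ to the sum (once for $x$ and once for $y$), and the number of such edges is exactly $W(X)$ by definition. Every edge $e = \{x,y\}$ with exactly one endpoint in $X$ (say $x \in X$, $y \notin X$) contributes $1$ to the sum, and the number of such edges is precisely $|\partial X| = S(X)$, by the definition \eqref{eq:edgeboundary} of the edge boundary. Edges with no endpoint in $X$ contribute $0$.

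Adding these two contributions gives $d_{tot}(X) = 2W(X) + S(X)$, which rearranges to the claimed identity
\begin{equation*}
S(X) = d_{tot}(X) - 2 W(X).
\end{equation*}
The only thing to verify carefully is that these two classes of edges are disjoint and exhaust all edges contributing to the sum, which is immediate from the partition of $\mathcal{V}$ into $X$ and $\mathcal{V}\setminus X$. Finiteness of $X$ together with the bounded degree assumption ensures that every quantity involved is finite, so no issue arises in rearranging the equation. There is no real obstacle here; the statement is purely combinatorial and follows directly from the handshake-type accounting described above.
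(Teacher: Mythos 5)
Your proof is correct and is essentially the same double-counting argument as the paper's: you classify edges incident to $X$ by whether the other endpoint is inside or outside $X$, which is exactly the paper's splitting of the inner sum in $d_{tot}(X) = \sum_{x\in X}\sum_{y:\{x,y\}\in\mathcal{E}}1$ into $y\in X$ and $y\notin X$ contributions. The only cosmetic difference is that you phrase it as counting incidence pairs $(x,e)$ while the paper works directly with the double sums over vertices.
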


\begin{proof}
We have 
\begin{equation} \label{eq:S(X)}
S(X)=|\partial X|=\sum_{x\in X}\left(\sum_{y\notin X:\{x,y\}\in\mathcal{E}}1 \right)\:.
\end{equation}
as well as
\begin{equation} \label{eq:totdegree}
d_{tot}(X)=\sum_{x\in X}\left(\sum_{y\in\mathcal{V}:\{x,y\}\in\mathcal{E}}1 \right)=\sum_{x\in X}\left(\sum_{y\in X:\{x,y\}\in\mathcal{E}}1 \right)+\sum_{x\in X}\left(\sum_{y\notin X:\{x,y\}\in\mathcal{E}}1 \right)\:.
\end{equation}
Finally, $W(X)$ can be expressed as
\begin{equation} \label{eq:potential}
W(X)=\sum_{\{x,y\}\subset X: \{x,y\}\in\mathcal{E}}1=\frac{1}{2}\sum_{x\in X}\left(\sum_{y\in X:\{x,y\}\in\mathcal{E}}1\right)\:,
\end{equation}
where the prefactor of $1/2$ on the right hand side of \eqref{eq:potential} accommodates for the fact that each edge is counted twice in the summation appearing there. Combined, this yields \eqref{eq:georel}.
\end{proof}

Given a graph $\mathcal{G}=(\mathcal{V,E})$, let us now define its \emph{$N$-th symmetric product.}

\begin{definition} Let $\mathcal{G}=(\mathcal{V,E})$ be an undirected connected and countable graph. For any $N\in\N$ such that $N\leq |\mathcal{V}|$, we then define the {\bf{$N$-th symmetric product}} $\mathcal{G}_N=(\mathcal{V}_N,\mathcal{E}_N)$ of $\mathcal{G}$ to be the graph with vertex set 
\begin{equation} 
\mathcal{V}_N=\{X\subset\mathcal{V}:|X|=N\}
\end{equation}
and edge set 
\begin{equation} 
\mathcal{E}_N=\{\{X,Y\}:X,Y\in\mathcal{V}_N, X\triangle Y\in \mathcal{E}\}\:.
\end{equation}
Here $X\triangle Y=(X\setminus Y)\cup (Y\setminus X)$ is the symmetric difference of the sets $X$ and $Y$. Moreover, let $d_N(X,Y)$ denote the graph distance between two vertices $X,Y\in\mathcal{V}_N$ on $\mathcal{G}_N$.
\end{definition}

\begin{remark} {\rm It will be consistent with our later considerations to define $\mathcal{G}_0$ to be the trivial graph with only one vertex, i.e.\ $\mathcal{G}_0:=(\{\emptyset\},\emptyset)$.}
\end{remark}

\begin{remark} \label{neighbor}
{\rm Note that $(X,Y) \in \mathcal{E}_N$ means that $Y$ is found from $X$ by moving one element of $X$ to an unoccupied next neighbor in $\mathcal{G}$, while all other elements stay in place. This can be done in exactly $S(X)$ ways. Thus, for a vertex $X\in {\mathcal V}_N$, $S(X)$ is its {\it local degree} in the graph ${\mathcal G}_N$. As ${\mathcal G}$ has degree bounded by $d$, \eqref{eq:georel} implies that each ${\mathcal G}_N$ has degree bounded by $Nd$.}
\end{remark}

\begin{remark}
{\rm Using labelings $X=\{x_1,\ldots,x_N\}$ and $Y=\{y_1,\ldots,y_N\}$, the graph distance $d_N(\cdot,\cdot)$ can be expressed in terms of $d(\cdot,\cdot)$ by
\begin{equation} \label{distN}
d_N(X,Y) = \min_{\pi \in D_N} \sum_{j=1}^N d(x_j, y_{\pi(j)}).
\end{equation}
In particular, all ${\mathcal G}_N$ are connected.
While \eqref{distN} is known, e.g.\ \cite{O1}, and a proof can be found by combining arguments in Chapter V of \cite{Johns}, we include a short self-contained proof in Appendix~\ref{AppA} for the reader's convenience.}
\end{remark}

\subsection{The XXZ Hamiltonian on finite graphs} \label{XXZfinite}

 Let us denote the canonical basis of $\C^2$ by
\begin{equation}
e_0 := \begin{pmatrix} 1\\0 \end{pmatrix}  \quad \mbox{and} \quad e_1 := \begin{pmatrix} 0\\1 \end{pmatrix}\:,
\end{equation}
interpreted as ``up-spin" and ``down-spin", respectively. 

We will initially consider {\it finite} graphs $\mathcal{G} =(\mathcal{V},\mathcal{E})$ and define the Hilbert space 
\begin{equation} 
\mathcal{H}_\mathcal{G}=\bigotimes_{x\in\mathcal{V}}\mathcal{H}_x=\bigotimes_{x\in\mathcal{V}}\C^2\:,
\end{equation}
which has dimension $2^{|\mathcal{V}|}$. 
For $x, y \in \mathcal{V}$ with $x\neq y$ we define the two-site Hamiltonian 
\begin{eqnarray} \label{eq:twosite}
h_{xy} & = & -\frac{1}{\Delta} (S_x^1 S_y^1 + S_x^2 S_y^2) - S_x^3 S_y^3 +\frac{1}{4} \idty \\ 
& = & -\frac{1}{\Delta}\left(\vec{S}_x\vec{S}_y-\frac{1}{4}\idty\right)-\left(1-\frac{1}{\Delta}\right)\left(S^3_xS^3_y-\frac{1}{4}\idty\right)\:, \notag
\end{eqnarray}
acting on the local Hilbert space $\mathcal{H}_x\otimes\mathcal{H}_y=\C^2\otimes\C^2$. Here $\vec{S}=(S^1,S^2,S^3)$ are the spin matrices
\begin{equation}
S^1 = \frac{1}{2} \begin{pmatrix} 0 & 1 \\ 1 & 0 \end{pmatrix}, \quad S^2 = \frac{1}{2} \begin{pmatrix} 0 & -i \\ i & 0 \end{pmatrix}, \quad S^3 = \frac{1}{2} \begin{pmatrix} 1 & 0 \\ 0 & -1 \end{pmatrix}
\end{equation} 
and $\vec{S}_x\vec{S}_y := S_x^1 S_y^1 + S_x^2 S_y^2 + S_x^3 S_y^3$.
At this point it is also convenient to introduce the spin lowering operator $S^-$ and the spin raising operator $S^+$:
\begin{align} 
 S^- &:= S^1 -iS^2 = \begin{pmatrix} 0 & 0 \\ 1 & 0 \end{pmatrix}\:,\\
 S^+ &:= S^1+iS^2 = \begin{pmatrix} 0 & 1 \\ 0 & 0 \end{pmatrix}\:,
\end{align}
for which we have $S^- e_0=e_1$, $S^+ e_1=e_0$ and $S^+ e_0=S^- e_1=0$.

For the anisotropy parameter $\Delta$, we will generally choose 
 \begin{equation} \Delta>1,
 \end{equation}
  corresponding to the {\it Ising phase}, which assures that both terms in the second line of \eqref{eq:twosite} are non-negative.
 Now, the XXZ Hamiltonian on $\mathcal{G}$ is given by
\begin{equation} \label{XXZHam}
H=H_{\mathcal{G}} = \sum_{\{x,y\}\in \mathcal{E}} h_{xy}\:,
\end{equation}
with the understanding that $h_{xy}$ acts as the identity on the sites different from $x$ and $y$. Since $\mathcal{G}$ is finite, equation \eqref{XXZHam} rigorously defines $H_\mathcal{G}$ as a self-adjoint operator in $\mathcal{H}_{\mathcal{G}}$.
Let us call 
\begin{equation} 
\Omega:=\bigotimes_{x\in\mathcal{V}} e_0 
\end{equation}
the ``all-spins-up" or ``vacuum" state. In view of this terminology, we will also use the term ``particles" instead of ``down-spins", with $S_X^-$ and $S_X^+$ taking the role of creation and annihilation operators at site $x$. For any $\varepsilon>0$ and any $x\in\mathcal{V}$, note that $\Omega$ is --- up to a complex phase --- the unique normalized element of $\ker\left(H_\mathcal{G}+\varepsilon \mathcal{N}_x\right)$. (Using connectedness of $\mathcal{G}$ and that the XXZ system in the Ising phase is ``frustration free'', this is seen as in the case of the chain, compare \cite{FS}. It also follows from our methods below, e.g.\ \eqref{eq:dropestimate}.) Here we have defined the {\it local particle number operator} at the vertex $x$ as 
\begin{equation}
\mathcal{N}_x:= S_x^- S_x^+ =\begin{pmatrix} 0 & 0 \\ 0 & 1 \end{pmatrix}_x = \frac{1}{2} \idty - S_x^3\:.
\end{equation}

Next, for any subset $X\subset \mathcal{V}$, we define
\begin{equation} \label{eq:spinbasis}
\phi_X:=\left(\prod_{x\in X}S_x^-\right) \Omega\:,
\end{equation}
which corresponds to the state of down-spins located at the sites in $X$ and up-spins everywhere else. In particular, we have that $\{\phi_X\}_{X\subset\mathcal{V}}$ is an orthonormal basis of $\mathcal{H}_\mathcal{G}$.
We get 
\begin{equation} 
\mathcal{N}_x\phi_X=\begin{cases} \phi_X\qquad&\text{if}\qquad x\in{X}\\0\qquad&\text{if}\qquad x\notin{X}\:.\end{cases}
\end{equation}
It is now important to note that $H_{\mathcal{G}}$ conserves the total number of down-spins or particles, i.e., if we define
\begin{equation} 
\mathcal{N}:=\sum_{x\in\mathcal{G}}\mathcal{N}_x\:,
\end{equation} 
this means that $[H_\mathcal{G},\mathcal{N}]=0$, as seen by a calculation. Let  $\mathcal{H}_{\mathcal{G}}^N$ denote the eigenspace corresponding to the eigenvalue $N\in\sigma(\mathcal{N})=\{0,1,\dots,|\mathcal{V}|\}$. It is not hard to see that
\begin{equation}
\mathcal{H}_\mathcal{G}^N=\spann\{\phi_X: |X|=N\}\,.
\end{equation}
Moreover, we define $H^N:=H\upharpoonright_{\mathcal{H}_\mathcal{G}^N}$ and note that 
\begin{equation} \label{Hdecomp} 
H=\bigoplus_{N=0}^{|\mathcal{V}|}H^N
\end{equation}
as an operator on $\mathcal{H}_\mathcal{G}=\bigoplus_{N=0}^{|\mathcal{V}|}\mathcal{H}_\mathcal{G}^N$.

\subsection{Equivalence to a Schr\"odinger type operator}\label{subsec:equiv}

Here we still assume, as in Section~\ref{XXZfinite}, that $\mathcal{G}$ is finite, i.e.\ that $|\mathcal{V}|<\infty$. The purpose of this section is to show the equivalence of $H^N$ to a Schr\"odinger operator acting on $N$ hard-core bosons with the sites in $\mathcal{V}$ as their configuration space. Here we view down-spins as particles embedded in a ``vacuum'' of up-spins.  The vertices of the $N$-th symmetric product ${\mathcal G}_N$ (i.e.\ the $N$-element subsets of  $\mathcal{V}$) are the different possible configurations of $N$ particles occupying $N$ sites in $\mathcal{V}$. This corresponds to hardcore bosons as no site can be occupied by more than one particle. Thus the symmetric product graphs of $\mathcal{G}$ become a tool to more explicitly describe the operators $H^N$. 

To be more explicit, let us thus consider the Hilbert space $\ell^2(\mathcal{V}_N)$
with inner product $\langle\cdot,\cdot\rangle$ given by $\langle f,g\rangle=\sum_{X\in\mathcal{V}_N}\overline{f(X)}g(X)$. 
For any $X\in\mathcal{V}_N$, we now define the functions $\delta_X:\mathcal{V}_N\rightarrow\C$ to be $\delta_X(Y)=\delta_{XY}$. In particular, we also get $\langle\delta_X,\delta_Y\rangle=\delta_{XY}$.

We will from now on make the natural identification of the canonical basis vectors $\delta_X$ with the $N$-down-spin vectors $\phi_X$ from \eqref{eq:spinbasis} and thus of the Hilbert space $\ell^2({\mathcal V}_N)$ with $\mathcal{H}_{\mathcal{G}}^N$. We now proceed to showing that with this identification $H^N$ takes the form of a discrete Schr\"odinger operator on $\ell^2(\mathcal{V}_N)$. To this end, we introduce the following operators:
\begin{definition}
Let $\mathcal{G}=(\mathcal{V,E})$ be a finite graph. For any $0\leq N\leq |\mathcal{V}|$, let $D_N$ be the multiplication operator by the surface area function $S(X)$ in $\ell^2(\mathcal{V}_N)$, i.e.
\begin{align} \label{eq:defDN}
(D_Nf)(X)&=S(X)f(X)=|\partial X|f(X)\:
\end{align}
for $X\in {\mathcal V}_N$. Moreover, let $A_N$ denote the adjacency operator on $\mathcal{G}_N$, i.e., for all $X\in \mathcal{V}_N$,
\begin{equation}
(A_Nf)(X) = \sum_{Y\in {\mathcal V}_N:\{X,Y\} \in \mathcal{E}_N} f(Y) = \sum_{Y\in {\mathcal V}_N: X \Delta Y \in \mathcal{E}} f(Y).
\end{equation}
\end{definition}

We use the letter ``$D$'' in \eqref{eq:defDN} because, by Remark~\ref{neighbor}, $D_N$ is the degree function of ${\mathcal G}_N$.

In describing the restriction of $H_{\mathcal G}$ to $\ell^2(\mathcal{V}_N)$, we will find that the contributions of the local operators $h_{xy}$ are non-vanishing on a state $\phi_X$ only when $\{x,y\}\in\partial X$, i.e.\ only at the surface of a configuration $X$. In this case, $h_{xy}$ describes a spin or particle hopping operation (due to the XX-term in \eqref{eq:twosite}) as well as a potential contribution (corresponding to the $Z$-term).

\begin{proposition} \label{prop:HNrep}
For every $N\in \N$ and $\Delta>1$,
\begin{equation} \label{HNrep}
H^N = -\frac{1}{2\Delta} A_N +\frac{1}{2}D_N\:.
\end{equation} 
\end{proposition}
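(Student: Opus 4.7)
The plan is to verify the identity \eqref{HNrep} by computing the action of $H^N$ on the orthonormal basis $\{\phi_X : X \in {\mathcal V}_N\}$, using the rewriting of the two-site Hamiltonian $h_{xy}$ in terms of particle number operators and raising/lowering operators. This reduces the claim to a per-basis-vector calculation that splits cleanly into a diagonal ``Z-piece'' matching $\frac{1}{2} D_N$ and an off-diagonal ``XX-piece'' matching $-\frac{1}{2\Delta} A_N$.

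First I would rewrite $h_{xy}$ in a form that is directly compatible with the basis $\{\phi_X\}$. Using the identities $S^1 = (S^+ + S^-)/2$ and $S^2 = (S^+ - S^-)/(2i)$, one gets
\begin{equation}
S_x^1 S_y^1 + S_x^2 S_y^2 = \tfrac{1}{2}\bigl(S_x^+ S_y^- + S_x^- S_y^+\bigr),
\end{equation}
while $S_x^3 = \tfrac{1}{2}\idty - \mathcal{N}_x$ gives $-S_x^3 S_y^3 + \tfrac{1}{4}\idty = \tfrac{1}{2}(\mathcal{N}_x + \mathcal{N}_y) - \mathcal{N}_x \mathcal{N}_y$. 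Thus
\begin{equation}
h_{xy} = -\tfrac{1}{2\Delta}\bigl(S_x^+ S_y^- + S_x^- S_y^+\bigr) + \tfrac{1}{2}(\mathcal{N}_x + \mathcal{N}_y) - \mathcal{N}_x \mathcal{N}_y.
\end{equation}

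Next I would evaluate the two pieces on $\phi_X$ with $|X| = N$. Case analysis on whether $x, y$ lie in $X$ or not shows that the Z-contribution $\tfrac{1}{2}(\mathcal{N}_x + \mathcal{N}_y)\phi_X - \mathcal{N}_x \mathcal{N}_y \phi_X$ equals $\tfrac{1}{2}\phi_X$ precisely when exactly one of $x, y$ is in $X$ (i.e.\ $\{x,y\} \in \partial X$), and vanishes otherwise. For the XX-contribution, $S_y^- \phi_X$ creates a down-spin at $y$ (zero if $y \in X$) and $S_x^+$ annihilates one at $x$ (zero if $x \notin X$), and these operators commute across distinct sites, so $S_x^+ S_y^- \phi_X = \phi_{(X\setminus\{x\})\cup\{y\}}$ exactly when $x \in X, y \notin X$, and $0$ otherwise. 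The symmetric term $S_x^- S_y^+$ handles the reverse inclusion. In either subcase the resulting $Y := (X \setminus \{x\}) \cup \{y\}$ satisfies $|Y|=N$ and $X\triangle Y = \{x,y\} \in \mathcal{E}$, so $\{X,Y\} \in \mathcal{E}_N$.

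Finally I would sum over all edges $\{x,y\} \in \mathcal{E}$. The Z-piece contributes $\tfrac{1}{2}|\partial X|\phi_X = \tfrac{1}{2}S(X)\phi_X = \tfrac{1}{2}(D_N \phi_X)$ by the definition of $D_N$. For the XX-piece, Remark~\ref{neighbor} provides the bijection between edges $\{x,y\} \in \partial X$ and neighbors $Y$ of $X$ in ${\mathcal G}_N$, so
\begin{equation}
-\tfrac{1}{2\Delta}\sum_{\{x,y\}\in \mathcal{E}} \bigl(S_x^+ S_y^- + S_x^- S_y^+\bigr) \phi_X = -\tfrac{1}{2\Delta}\sum_{Y:\{X,Y\}\in \mathcal{E}_N} \phi_Y = -\tfrac{1}{2\Delta}(A_N \phi_X).
\end{equation}
Adding the two contributions and invoking the identification $\phi_X \leftrightarrow \delta_X$ yields \eqref{HNrep} on every basis vector, hence on all of $\mathcal{H}_{\mathcal G}^N$.

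The calculation is essentially bookkeeping once the right form of $h_{xy}$ is in place, so there is no real obstacle; the only point requiring care is the bijection in the last step, which is precisely the content of Remark~\ref{neighbor}: each off-diagonal edge of ${\mathcal G}_N$ incident to $X$ arises from a unique boundary edge of $X$ in ${\mathcal G}$, with no sign issues because raising/lowering operators on distinct sites commute.
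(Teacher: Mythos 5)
Your proposal is correct and follows essentially the same route as the paper's own proof: you rewrite $h_{xy}$ in the identical form $-\mathcal{N}_x\mathcal{N}_y+\tfrac{1}{2}(\mathcal{N}_x+\mathcal{N}_y)-\tfrac{1}{2\Delta}(S_x^-S_y^++S_y^-S_x^+)$, do the same case analysis on whether $x,y$ lie in $X$, and invoke Remark~\ref{neighbor} to match boundary edges of $X$ with neighbors of $X$ in ${\mathcal G}_N$. The only cosmetic difference is that you derive the rewritten $h_{xy}$ from $S^\pm$ and $S^3 = \tfrac12\idty-\mathcal{N}$ rather than stating it as a checkable identity, and you compress the three-way case split into a shorter argument.
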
 

Note that, as the graph Laplacian $({\mathcal L}_Nf)(X)= \sum_{Y\in {\mathcal V}_N: X \Delta Y \in {\mathcal E}} (f(X)-f(Y))$ on ${\mathcal G}_N$ is related to the adjacency operator $A_N$ by $\mathcal{L}_N=A_N-D_N$, we can re-express \eqref{HNrep} as
\begin{equation} \label{HNrep2}
H^N =-\frac{1}{2\Delta}\mathcal{L}_N+\frac{1}{2}\left(1-\frac{1}{\Delta}\right) D_N\:.
\end{equation} 
Also note that $H^0=0$ on the one-dimensional subspace $\ell^2({\mathcal V}_0)$ (representing the all-spins-up vector).

\begin{proof}(of Proposition~\ref{prop:HNrep})
Identifying $\phi_X$ with $\delta_X$ as above, \eqref{HNrep} amounts to showing that 
\begin{equation} \label{HNrep3}
H \phi_X = -\frac{1}{2\Delta} \sum_{Y\in {\mathcal V}_N: X\Delta Y \in {\mathcal E}} \phi_Y +\frac{1}{2}S(X) \phi_X
\end{equation}
for any $X\in\mathcal{V}_N$. The main ingredient to proving this is the identity
\begin{equation} h_{xy}=-\mathcal{N}_x\mathcal{N}_{y}+\frac{1}{2}\mathcal{N}_x+\frac{1}{2}\mathcal{N}_{y}-\frac{1}{2\Delta}\left(S_x^-S_{y}^++S_{y}^-S_x^+\right)\:,
\end{equation}
which can be checked by an explicit calculation. 

Now, for any $X\subset\mathcal{V}$ with $|X|=N$, there are three possibilities:
\begin{itemize}
\item First possibility: $x\in X$ and $y\in X$. In this case, we get 
\begin{equation}
-\mathcal{N}_x\mathcal{N}_y\phi_X+\frac{1}{2}\mathcal{N}_x\phi_X+\frac{1}{2}\mathcal{N}_y\phi_X=-\phi_X+\frac{1}{2}\phi_X+\frac{1}{2}\phi_X=0
\end{equation}
as well as $S_y^+ S_x^-\phi_X=0$ and $S_x^+ S_y^-\phi_X=0$, 
from which we conclude that $h_{xy}\phi_X=0$.
\item Second possibility: $x\notin X$ and $y\notin X$. In this case, we get 
\begin{equation}
-\mathcal{N}_x\mathcal{N}_y\phi_X+\frac{1}{2}\mathcal{N}_x\phi_X+\frac{1}{2}\mathcal{N}_y\phi_X=0+0+0=0
\end{equation}
as well as $S_y^- S_x^+\phi_X=0$ and $S_x^- S_y^+\phi_X=0$. Thus, once again, $h_{xy}\phi_X=0$.
\item Third possibility: One vertex is an element of $X$, while the other one is not. Without loss let $x\in X$ but $y\notin X$. In this case we get 
\begin{equation}
-\mathcal{N}_x\mathcal{N}_y\phi_X+\frac{1}{2}\mathcal{N}_x\phi_X+\frac{1}{2}\mathcal{N}_y\phi_X=\frac{1}{2}\phi_X
\end{equation}
as well as $S_{y}^+S_x^-\phi_X=0$ and $S_{y}^-S_x^+\phi_X=\phi_{(X\setminus\{x\})\cup\{y\}}$.
We conclude that 
\begin{equation} 
h_{xy}\phi_X=-\frac{1}{2\Delta}\phi_{(X\setminus\{x\})\cup\{y\}}+\frac{1}{2}\phi_X \,.
\end{equation}
\end{itemize}
Now, for any $X\in\mathcal{V}_N$ consider
\begin{equation}\label{eq:equivalence}
H \phi_X=\sum_{\{x,y\}\in\mathcal{E}}h_{xy}\phi_X=\sum_{x\in X}\left(\sum_{y\notin X: \{x,y\}\in\mathcal{E}}\left(-\frac{1}{2\Delta}\phi_{(X\setminus\{x\})\cup\{y\}}+\frac{1}{2}\phi_X\right)\right)\:,
\end{equation}
since only edges $\{x,y\}\in\mathcal{E}$ for which the third possibility holds contribute to the sum. 

It follows from Remark~\ref{neighbor} that the set of all $Y\in\mathcal{V}_N$ such that $X\triangle Y\in\mathcal{E}$ is given by
\begin{equation} \label{eq:adjsites}
\{Y:X\triangle Y\in\mathcal{E}\}=\bigcup_{x\in X}\left(\bigcup_{y\notin X: \{x,y\}\in\mathcal{E}} \left((X\setminus\{x\})\cup\{y\}\right)\right)\:.
\end{equation}
This yields
\begin{equation} 
\sum_{x\in X}\left(\sum_{y\notin X: \{x,y\}\in\mathcal{E}}\phi_{(X\setminus\{x\})\cup\{y\}}\right)=\sum_{Y\in\mathcal{V}_N:X\triangle Y\in\mathcal{E}}\phi_Y\:.
\end{equation}
Moreover, by \eqref{eq:S(X)},
\begin{equation} 
\sum_{x\in X}\left(\sum_{y\notin X: \{x,y\}\in\mathcal{E}}\phi_X\right)=\sum_{x\in X}\left(\sum_{y\notin X: \{x,y\}\in\mathcal{E}}1\right)\phi_X=S(X)\phi_X\:.
\end{equation}
These two identities plugged back into \eqref{eq:equivalence} show \eqref{HNrep3} and thus the proposition.
\end{proof}

\begin{remark} \label{rem:extfield} {\rm Particle number conservation of the XXZ system is preserved under the addition of an external magnetic field in the $3$-direction. In particular, we will consider the Hamiltonian 
\begin{equation}
H + \sum_{x\in \mathcal{V}} v(x) \mathcal{N}_x
\end{equation}
with $H$ from \eqref{XXZHam} (note that the field term differs from $\sum_x v(x) S_x^3$ only by an energy shift). The corresponding $N$-particle restrictions then become $H^N + V_N$, where $V_N$ is the multiplication operator by the restriction of
\begin{equation} \label{eq:potentialdef}
V(X)=\sum_{x\in X}v(x)\:.
\end{equation}
to ${\mathcal V}_N$. This is simply the $N$-body exterior potential of a system of hardcore bosons under the single particle potential $v:{\mathcal V} \to \R$.}
\end{remark}

\subsection{The XXZ model on infinite graphs}  The most pragmatic way to introduce the XXZ model on a {\it countable infinite graph} $\mathcal G$ is to use Proposition~\ref{prop:HNrep} in conjunction with \eqref{Hdecomp} as a definition, i.e., to define
\begin{equation} \label{Hinfdef}
H = H_{\mathcal G} := \bigoplus_{N=0}^{\infty} H^N
\end{equation}
on ${\mathcal H}_{\mathcal G} := \bigoplus_{N=0}^{\infty} \ell^2({\mathcal V}_N)$, where $H^0=0$ and \begin{equation} \label{eq:HNinfinite} 
H^N = -\frac{1}{2\Delta}A_N + \frac{1}{2}D_N = -\frac{1}{2\Delta} {\mathcal L}_N + \frac{1}{2}(1-1/\Delta) D_N
\end{equation} 
is bounded (with $\|H^N\| \le (\frac{1}{2\Delta} + 1)Nd$) and self-adjoint for each $N$. On infinite graphs the direct sum $H$ is generally an unbounded self-adjoint operator.

Alternatively, when properly interpreted, we may also introduce an infinite XXZ system directly via $H = \sum_{\{x,y\} \in {\mathcal E}} h_{xy}$ as in \eqref{XXZHam}. The sum is initially only well-defined on ${\mathcal H}_{\mathcal{G},0}$, the incomplete inner product space given by finite linear combinations of the orthonormal vectors $\varphi_X$, $X\subset {\mathcal V}$, $|X|<\infty$. By the argument in the proof of Proposition~\ref{prop:HNrep}, ${\mathcal H}_{\mathcal{G},0}$ is invariant under $\sum_{\{x,y\}\in {\mathcal E}} h_{xy}$ and, again identifying $\varphi_X$ with $\delta_X$, coincides there with the restriction of $H_{\mathcal G}$ defined by \eqref{Hinfdef} to ${\mathcal H}_0 = \bigoplus_N \ell_0^2({\mathcal V}_N)$ (with $\ell_0^2({\mathcal V}_N)$ the {\it finitely} supported functions on ${\mathcal V}_N$). This operator is essentially self-adjoint in the Hilbert space completion of ${\mathcal H}_0$ and its closure is unitarily equivalent to \eqref{Hinfdef}.

In particular, this means that $H$ can be calculated via \eqref{XXZHam} on the vectors $\varphi_X$ (and their finite linear combinations) also for the case of infinite graphs.

\subsection{Classical droplets and droplet spectrum} \label{sec:droplets}

Recall that in \eqref{eq:S(X)} we have defined $S(X)$ to be the surface measure of the configuration $X\subset\mathcal{V}$, i.e.\
\begin{equation} \label{degsur}
S(X)=|\partial X|\:.
\end{equation}
This means that the ground states of the operator $D_N$ on $\ell^2(\mathcal{V}_N)$ are given by the functions $\delta_X$, where $X$ is a subset of $\mathcal{V}$ with exactly $N$ elements and smallest possible edge boundary, i.e.
\begin{equation} \label{eq:DGSenergy}
D_{N,min}:=\min\sigma\left(D_N\right)=\min \left\{|\partial X|: X\subset\mathcal{V}, |X|=N\right\}\:.
\end{equation}
We thus refer to the states $\delta_X$ with $|\partial X|=D_{N,min}$ as {\bf classical droplets}, as they have minimal surfaces. Moreover, we denote the set of all classical droplet configurations in ${\mathcal V}_N$ by
\begin{equation} \label{eq:classdroplets}
{\mathcal V}_{N,1} := \{ X \in {\mathcal V}_N: \, |\partial X|= D_{N,min}\}
\end{equation}
and its complement by $\bar{\mathcal V}_{N,1}:=\{X\in\mathcal{V}_N:X\notin\mathcal{V}_{N,1}\}$. 

The determination of all minimizers of \eqref{eq:DGSenergy}, i.e., of all subsets $X$ of the graph of given volume $N$ with minimal surface measure $|\partial X|$ is known as the {\it edge-isoperimetric problem} on ${\mathcal G}$, see \cite{Bezrukov} for a survey as well as further references. We include some results on this for the examples of most interest to us in Appendix~\ref{sec:AppB}.

Being in the Ising phase $\Delta>1$ means that, in ways to be made specific below, the operator $H^N$ from \eqref{eq:HNinfinite} is dominated by the potential term $D_N$ (the limiting Hamiltonian as $\Delta\to\infty$). A first consequence of considering the Ising phase is the existence of a ground state gap for the XXZ model on arbitrary {\it infinite} connected graphs $\mathcal G$: We have $D_{N,min} \ge 1$ for all $N\in \N$ (all finite subsets of $\mathcal V$ have non-empty boundary). As this holds uniformly in $N$ and $-\frac{1}{2\Delta}{\mathcal L}_N$ is non-negative, \eqref{HNrep2} shows that 
\begin{equation} \label{eq:dropestimate}
H^N \ge \frac{1}{2}\left(1-\frac{1}{\Delta}\right) 
\end{equation}
for all $N\in \N$. Thus $E_0=0$ (coming from $H^0=0$ on the one-dimensional space spanned by the all-up-spins state) is the non-degenerate ground state energy of $H$ with a ground-state gap of at least $\frac{1}{2}(1-\frac{1}{\Delta})$. In concrete examples the gap will generally be larger.

In much of the remainder of this work we ask if the Hamiltonians $H^N$, in the Ising phase $1<\Delta<\infty$ and for energies in a vicinity of the classical droplet energy $D_{N,min}/2$, show a {\it quantum droplet phase}, separated from other parts of the spectrum by an additional gap and with eigenstates corresponding to spin configurations with a single cluster of down-spins, at least approximately.

Among non-droplet configurations the potential $D_N$ is bounded below by 
\begin{equation}
\bar{D}_N := \min \{S(X):\, X \in \bar{\mathcal V}_{N,1}\}.
\end{equation}
Obviously, $\bar{D}_N \ge D_{N,\min} + 1$, but the difference can be larger. For example, if ${\mathcal G} = ({\mathcal V}, {\mathcal E})$ is regular, i.e., each $x\in {\mathcal V}$ has a fixed number $d$ of next neighbors, then $\bar{D}_N \ge  D_{N,\min} + 2$ (in this case $d_{tot}(X) = Nd$ in \eqref{eq:georel} is a constant, so that each $S(X)$ differs by an even integer from $D_{N,min}$). 

This and the second form of the Hamiltonian $H^N$ in \eqref{eq:HNinfinite} motivates us to define the {\bf droplet spectrum} of $H^N$ as 
\begin{equation} \label{dropband} 
\delta_N := \left(0,\frac{1}{2}\left(1-\frac{1}{\Delta}\right) \bar{D}_N\right) \cap \sigma(H^N).
\end{equation}
The droplet spectrum excludes the vacuum energy $E_0=0$ and (again due to non-negativity of $-\frac{1}{2\Delta}{\mathcal L}_N$) energies associated with non-droplet configurations: 
\begin{equation} 
\bar{P}_N H_N \bar{P}_N \ge \frac{1}{2}\left(1-\frac{1}{\Delta}\right) \bar{D}_N,
\end{equation}
where $\bar{P}_N$ is the orthogonal projection onto $\ell^2(\bar{\mathcal V}_{N,1})$.

The much studied prototypical example of the above construction is the exactly solvable one-dimensional infinite XXZ chain, i.e., the case where $\mathcal G$ is the one-dimensional lattice $\Z$ (we will generally simply write $\Z^d$ for the graph with vertices $x\in \Z^d$ and edges given by next neighbors in the $\ell^1$-distance). In this case the droplet phase at the bottom of the 1D XXZ chain was first described in \cite{NS}, see also \cite{NSS} and \cite{F}. Here $\bar{D}_N=4$ for all $N$ and the droplet bands are explicitly given by the intersection of $(0,2(1-1/\Delta))$ with
\begin{equation} \label{dropband1D}
\tilde{\delta}_N = \left[ \tanh(\rho) \cdot \frac{\cosh(N\rho)-1}{\sinh(N\rho)}, \tanh(\rho) \cdot \frac{\cosh(N\rho)+1}{\sinh(N\rho)} \right], 
\end{equation}
where $\rho:=1/\Delta$. The intervals $\tilde{\delta}_N$ are nested, $\tilde{\delta}_{N+1} \subset \tilde{\delta}_N$ for all $N\in \N$, and contract (exponentially fast) to the limit point $\tilde{\delta}_{\infty} = \sqrt{1-\frac{1}{\Delta^2}}$. Their union is $\tilde{\delta}_1= [1-1/\Delta,1+1/\Delta]$. For $\Delta>3$ this is strictly contained in $(0,2(1-1/\Delta))$ and thus, in particular, the {\it union of all droplet bands} of $H$ is separated by a non-trivial gap from the higher spectrum of $H$.

In the next section we will introduce a method which allows to prove related results for XXZ systems on larger classes of graphs and does not depend on exact solvability.

\section{An abstract approach to the existence of gaps} \label{sec:abstractgap}

Let ${G}=({V,E})$ be a countable graph, write $x\sim y$ for $\{x,y\}\in E$, and assume that is has bounded degree $\sup_{x\in V} D(x)<\infty$, where $D(x)=|\{y\in V:  x\sim y\}|$. In particular, this means that the adjacency operator $A_{G}$ on ${G}$ is a bounded operator on $\ell^2({V})$. For this section, let us drop the index $G$ and just write $A=A_G$.

Consider a disjoint decomposition $V=V_1 \cup V_2$ of the vertices and let $B$ be the ``boundary hopping operator'' from $\ell^2(V_2)$ to $\ell^2(V_1)$, i.e.,
\begin{equation} \label{hop21}
(Bf)(x) := \sum_{y\in V_2, y \sim x} f(y)
\end{equation}
for all $f\in \ell^2(V_2)$ and $x\in V_1$. Note that its adjoint is the boundary hopping operator from $\ell^2(V_1)$ to $\ell^2(V_2)$,
\begin{equation} \label{hop12}
(B^*g)(y) = \sum_{x\in V_1, x \sim y} g(x)
\end{equation}
for all $g\in \ell^2(V_1)$ and $y\in V_2$. 

These operators are bounded. More precisely:

\begin{lemma} \label{boundaryhop}
Let
\begin{equation} 
d_1 := \sup_{x\in V_1} |\{y\in V_2: x \sim y\}| \quad \mbox{and} \quad d_2 := \sup_{y\in V_2} |\{x\in V_1: x \sim y\}|. \end{equation}
Then
\begin{equation} \label{hopbound}
 \|B\| = \|B^*\| \le \sqrt{d_1 d_2}. 
 \end{equation}
\end{lemma}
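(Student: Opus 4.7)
The equality $\|B\|=\|B^*\|$ is a standard Hilbert space fact, so the content of the lemma is the bound $\|B\|\le\sqrt{d_1 d_2}$. My plan is a Schur-test style argument realized via a single Cauchy–Schwarz inequality on the bilinear form $\langle g,Bf\rangle$. The matrix entries of $B$ are $0$--$1$ indicators of edges between $V_1$ and $V_2$; the row sums, indexed by $x\in V_1$, are $|\{y\in V_2:y\sim x\}|\le d_1$, and the column sums, indexed by $y\in V_2$, are $|\{x\in V_1:x\sim y\}|\le d_2$. This is precisely the input the Schur test needs to yield an operator norm bound of $\sqrt{d_1 d_2}$.

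Concretely, for $f\in\ell^2(V_2)$ and $g\in\ell^2(V_1)$ I would first write
\[
\langle g,Bf\rangle \;=\; \sum_{\substack{x\in V_1,\,y\in V_2\\ x\sim y}} \overline{g(x)}\,f(y),
\]
then apply Cauchy–Schwarz with the index set being the edges between $V_1$ and $V_2$, factoring $|\overline{g(x)}f(y)|=|g(x)|\cdot|f(y)|$. This produces two sums over edges, each of which I would reorganize by exchanging the order of summation: the first becomes $\sum_{x\in V_1}|g(x)|^2\,|\{y\in V_2:y\sim x\}|\le d_1\|g\|^2$, and similarly the second is bounded by $d_2\|f\|^2$. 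Combining gives $|\langle g,Bf\rangle|\le\sqrt{d_1 d_2}\,\|g\|\|f\|$, and taking the supremum over unit vectors yields $\|B\|\le\sqrt{d_1 d_2}$.

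There is no real obstacle; the only point that warrants care is the interchange of summation orders when $V_1$ or $V_2$ is infinite. This is justified by first carrying out the estimate for finitely supported $f,g$ (where every sum is finite because $\mathcal G$ has bounded degree, and in particular $d_1,d_2<\infty$) and then extending by density, or equivalently by invoking Tonelli for the non-negative double series that appears after taking absolute values.
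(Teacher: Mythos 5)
Your proof is correct, and it is a mild variant of the paper's. The paper bounds $\|Bf\|^2$ directly: it applies Cauchy--Schwarz to the inner sum $\sum_{y\sim x}f(y)$ (which has at most $d_1$ nonzero terms), pulls out $d_1$, then exchanges the order of summation to bring in $d_2$. You instead estimate the bilinear form $\langle g,Bf\rangle$ with a single Cauchy--Schwarz over the set of $V_1$--$V_2$ edges, which produces the two factors $d_1\|g\|^2$ and $d_2\|f\|^2$ in a symmetric fashion --- this is the Schur test for a $0$--$1$ kernel, made explicit. Both proofs use exactly the same combinatorial input (the row- and column-sum bounds $d_1$ and $d_2$) and the same two tools (Cauchy--Schwarz plus a summation exchange); your version is a touch more symmetric and makes the Schur structure visible, while the paper's is a one-sided norm estimate that avoids dualizing. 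Your final remark about justifying the interchange via finite supports or Tonelli is fine but slightly more than needed: since $d_1,d_2<\infty$ by bounded degree, all the double series involved are absolutely convergent (in fact each row and column has only finitely many nonzero terms), so the exchange is automatic.
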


\begin{proof} We have
\begin{equation}
\|B f\|^2 =  \sum_{x\in V_1} \Big| \sum_{y\in V_2, y\sim x} f(y) \Big|^2 
 \le  d_1 \sum_{x\in V_1} \sum_{y\in V_2, y\sim x} |f(y)|^2, 
\end{equation}
where we used that the second sum has at most $d_1$ non-zero terms. We exchange the order of summation and further bound by
\begin{equation}
d_1 \sum_{y\in V_2} \sum_{x\in V_1, x \sim y} |f(y)|^2 \le d_1 d_2 \|f\|^2,
\end{equation}
as the second sum now has at most $d_2$ non-zero terms. Thus we have shown \eqref{hopbound}.
\end{proof}

Now assume that $U$ is the multiplication operator on $\ell^2(V)$ by a bounded function $U:V\to \R$ and denote $\sigma_j := \overline{\{U(x):x\in V_j\}}$, i.e., the spectra of the restrictions $U_j$ of $U$ to $\ell^2(V_j)$, $j=1,2$. We assume that 
\begin{equation} \label{eq:gap2}
\min \sigma_1 = \inf_{x\in V_1} U(x) \ge \sup_{x\in V_2} + S = \max \sigma_2+S,
\end{equation} 
i.e., $\sigma_1$ lies above $\sigma_2$ and there is a gap of size $S>0$ between $\sigma_1$ and $\sigma_2$. Due to the boundedness of $A$, for sufficiently small $g>0$ this leads to a gap in the spectrum of the Schr\"odinger-type operator
\begin{equation} \label{Schroetype}
H = -gA +U
\end{equation}
on $\ell^2(V)$ and we want to prove lower bounds on the size of this gap, which are better (in our applications) than the rough bound \begin{equation} \label{rough}
S-2g\|A\|.
\end{equation}

Let $E\in \R$ and write $H-E$ as a block operator with respect to the decomposition $\ell^2(V) = \ell^2(V_1) \oplus \ell^2(V_2)$ of the Hilbert space,
\begin{equation} \label{blockdec}
H-E = \begin{pmatrix} C_1 & gB \\ gB^* & C_2 \end{pmatrix}.
\end{equation}
Here $B$ and $B^*$ are the boundary hopping operators between $\ell^2(V_2)$ and $\ell^2(V_1)$ and $C_j = -gA_j + U_j -E$, where $A_j$ are the adjacency operators on the restricted graphs $G_j = (V_j, E_j=\{\{x,y\}\in E: x, y \in V_j\})$, $j=1,2$. Let $D$ by the multiplication operator on $\ell^2(V)$ by the degree function $D(x)$.

There is now a subtle distinction to be made between $D_i$ and $D\upharpoonright_{\ell^2(V_i)}$. Here, $D_i$ is the operator of multiplication by the degree function of the restricted graph $G_i=(V_i,E_i)$, where $E_i=\{\{x,y\}\in E: x,y\in V_i\}$. Since for any $x\in V_i$, we get
\begin{equation}
D(x)=|\{y\in V_i: x\sim y\}|+|\{y\in V\setminus V_i:x\sim y\}|\geq |\{y\in V_i: x\sim y\}|=D_i(x)\:,
\end{equation}
we conclude 
\begin{equation} \label{eq:degreecompression}
D\upharpoonright_{\ell^2(V_i)}\geq D_i\:.
\end{equation}
Also, since the negative Laplacian $-\mathcal{L}_i=D_i-A_i$ on $G_i$ is non-negative, we get
\begin{equation} \label{eq:laplestim}
-A_i\geq -D_i\:.
\end{equation}
Now, assume that 
\begin{equation} \label{eq:assE1}
E > \sup_{x\in V_2} U(x) +g\|A_2\|.
\end{equation} 
so that $C_2$ is invertible. Assume that the Schur complement $S:=C_1 - g^2 B C_2^{-1} B^*$ of the block operator \eqref{blockdec} is invertible as well. Then $H-E$ is invertible and
\begin{equation} \label{eq:SchurInv}
(H-E)^{-1} = \begin{pmatrix} S^{-1} & -gS^{-1}BC_2^{-1} \\ -gC_2^{-1} B^* S^{-1} & C_2^{-1} (I + g^2 B^* S^{-1}B) C_2^{-1} \end{pmatrix}.
\end{equation}

To prove that $S$ is invertible, now assume in addition that 
\begin{equation} \label{eq:assE2} 
E<\inf_{x\in V_1} (U(x)-gD(x))
\end{equation} 
so that, by \eqref{eq:laplestim} and \ref{eq:degreecompression},
\begin{align}
C_1 &= -gA_1 +U_1-E=-gA_1 +U_1-E \ge -gD_1 +U_1-E\\ &\geq -gD\upharpoonright_{\ell^2(V_1)} +U_1-E\geq \inf_{x\in V_1} (U(x)-gD(x)) -E >0 \notag
\end{align}
and therefore $C_1$ is invertible. Thus invertibility of the Schur complement $S=C_1-g^2 BC_2^{-1}B^* = C_1(I-g^2 C_1^{-1}B C_2^{-1} B^*)$ follows if
\begin{equation}
g^2 \|C_1^{-1} B C_2^{-1} B^*\| < 1.
\end{equation}
From Lemma~\ref{boundaryhop} and the considerations above we get
\begin{align}
\|C_1^{-1} B C_2^{-1} B^*\| & \le  d_1 d_2 \|C_1\|^{-1} \|C_2\|^{-1} \\
& \le  \frac{d_1 d_2}{\inf_{x\in V_1} (U(x)-gD(x)-E) \cdot (E-\sup_{x\in V_2} U(x) -g\|A_2\|)}. \notag
\end{align}

In summary, we have shown most of

\begin{proposition} \label{gapcrit1} Assume \eqref{eq:gap2} and that $E\in \R$ satisfies \eqref{eq:assE1} and \eqref{eq:assE2}. Let $g>0$ be such that
\begin{equation} \label{gapcrit}
g^2 < \frac{\inf_{x\in V_1} (U(x)-gD(x)-E) \cdot (E-\sup_{x\in V_2} U(x) -g\|A_2\|)}{d_1 d_2}\;.
\end{equation}
Then $E$ is not in the spectrum of $H = -gA +U$ and the Hilbert space dimension of the spectral projection of $H$ onto $(-\infty,E)$ is at least $|V_2|$.
\end{proposition}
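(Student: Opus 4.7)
The proposition has almost been proved in the paragraphs preceding the statement; what remains is to close the Schur complement argument and to add a short min-max step for the dimension bound. My plan is therefore: (i) deduce $E \notin \sigma(H)$ from the quantitative condition \eqref{gapcrit}; (ii) use $\ell^2(V_2)$, embedded into $\ell^2(V)$ by extension by zero on $V_1$, as a test subspace of low energy expectation; (iii) convert this by a one-line projection argument into the lower bound on $\dim P_{(-\infty,E)}(H)$.

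For step (i), I would first record that under \eqref{eq:assE1} and \eqref{eq:assE2} the diagonal blocks $C_1$ and $C_2$ of \eqref{blockdec} are boundedly invertible with
\[
\|C_1^{-1}\| \le \frac{1}{\inf_{x\in V_1}(U(x)-gD(x))-E}, \qquad \|C_2^{-1}\| \le \frac{1}{E-\sup_{x\in V_2}U(x)-g\|A_2\|},
\]
estimates already essentially established in the preceding discussion. Combined with Lemma~\ref{boundaryhop}, these yield $g^2\|C_1^{-1}BC_2^{-1}B^*\|<1$ precisely when \eqref{gapcrit} holds, so a Neumann series argument renders $I-g^2 C_1^{-1}BC_2^{-1}B^*$ invertible, and hence so is the Schur complement $S=C_1(I-g^2 C_1^{-1} B C_2^{-1} B^*)$. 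Formula \eqref{eq:SchurInv} then provides the explicit inverse of $H-E$, so $E\notin\sigma(H)$.

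For step (ii), given any nonzero $\psi \in \ell^2(V_2) \subset \ell^2(V)$, reading off the lower-right diagonal block of \eqref{blockdec} gives
\[
\langle \psi, H\psi\rangle = \langle \psi, (-gA_2+U_2)\psi\rangle \le \bigl(g\|A_2\|+\sup_{x\in V_2}U(x)\bigr)\|\psi\|^2 < E\|\psi\|^2,
\]
where the last strict inequality is exactly \eqref{eq:assE1}. For step (iii), set $P := P_{(-\infty,E)}(H)$ and observe that $P|_{\ell^2(V_2)}$ is injective: if $\psi \in \ell^2(V_2)$ satisfies $P\psi=0$, then since $E\notin\sigma(H)$ we have $\psi = P_{[E,\infty)}(H)\psi$ and therefore $\langle\psi,H\psi\rangle \ge E\|\psi\|^2$, contradicting the previous display unless $\psi=0$. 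Hence $\dim \mathrm{ran}\, P \ge \dim \ell^2(V_2) = |V_2|$.

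There is no real obstacle here: all the quantitative substance sits in the estimate \eqref{gapcrit} and the Schur complement bookkeeping already carried out before the statement. The only point requiring a moment's care is that step (iii) functions uniformly whether $|V_2|$ is finite or infinite; this is precisely why the argument is phrased through the injectivity of $P|_{\ell^2(V_2)}$, using $E\notin\sigma(H)$ to identify $I-P$ with $P_{[E,\infty)}(H)$, rather than through a finite-dimensional min-max count.
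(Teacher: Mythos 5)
Your proposal is correct and mirrors the paper's own argument: parts (i)–(ii) reproduce the Schur-complement invertibility check plus the negativity of the $C_2$ block, and part (iii) simply spells out the "variational principle" that the paper invokes without elaboration, phrased as injectivity of $P_{(-\infty,E)}(H)$ restricted to $\ell^2(V_2)$ — a clean formulation that also makes the infinite-dimensional case transparent.
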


We only need to comment on the concluding statement on dimensions. For all $\varphi\in \ell^2(V_2)\setminus \{0\}$, we get from the representation \eqref{blockdec} that
\begin{eqnarray}
\left\langle \begin{pmatrix} 0 \\ \varphi \end{pmatrix} , (H-E) \begin{pmatrix} 0 \\ \varphi \end{pmatrix} \right\rangle & = & \langle \varphi, C_2 \varphi \rangle = \langle \varphi, (-gA_2+U_2-E)\varphi \rangle \\
& \le & (\sup_{x\in V_2} U(x) + g\|A_2\| - E) \|\varphi\|^2 <0 \notag
\end{eqnarray}
by \eqref{eq:assE1}. Thus the claim follows from the variational principle. 

\vspace{.3cm}

In our applications below we consider the family of Schr\"odinger-type operators $H^N$ on the graphs ${\mathcal G}_N$ from \eqref{eq:HNinfinite} above (for a given graph $\mathcal G$), as well as their restrictions ${\mathcal G}_{N,1}$ and ${\mathcal G}_{N,2}$ to ${\mathcal V}_{N,1}$ and $\bar{\mathcal V}_{N,1}$, the droplet and non-droplet configurations, respectively. We aim at showing the existence of a droplet regime at the bottom of the spectrum of the $H^N$ for suitable types of graphs $\mathcal G$, including the presence of a gap which separates the droplet spectrum from higher spectral bands. Specifically, we are interested in large particle numbers $N$, and ideally in results which hold uniformly in the particle number (and thus also for the spectrum of $H$, i.e., the union of the spectra of all $H^N$). The trivial bound \eqref{rough} on the spectral gap will generally be useless here, because the norms $\|A_N\|$ of the adjacency operators are unbounded in $N$, even for one-dimensional graphs such as $\Z$.

The bound \eqref{gapcrit} is more suitable, because (i) in our application we have $U=\frac{1}{2}D$ and thus the first factor in the numerator of \eqref{gapcrit} becomes $\frac{1}{2}(1-\frac{1}{\Delta}) \bar{D}_N-E$ which generally grows with $N$ if $\Delta>1$, (ii) the norm of the adjacency operator restricted to the droplet configurations is bounded in $N$, at least for ``one-dimensional'' graphs, allowing to control the $N$-dependence of the second factor in the numerator, and (iii) the ``boundary hopping degrees'' $d_1$ and $d_2$, again in one-dimensional situations, are uniformly bounded above in $N$ as well.

In the next section we make this more precise for some examples. In particular, we demonstrate that strips of arbitrary width indeed show a droplet regime at the bottom of the spectrum if $\Delta$ is sufficiently large, depending on the width of the strip. We will also discuss that for a two-dimensional XXZ system there is no droplet regime, at least not uniform in the particle number.

\section{Gaps in the spectra of the XXZ Hamiltonian on various graphs} \label{sec:examples}

\subsection{Gaps in the spectrum of the 1-d chain} \label{gapchain}

In the following, we want to use Proposition \ref{gapcrit1} to show existence of spectral gaps in higher order bands of the XXZ spin chain -- uniformly in the particle number. Hence, the graph $\mathcal{G}=(\mathcal{V,E})$ for which we define the full XXZ Hamiltonian $H_\mathcal{G}$ as in \eqref{Hinfdef} is given by $\mathcal{V}=\Z$ and $\mathcal{E}=\{\{x,y\}\subset \Z : |x-y|=1\}$.

To be more precise, we are interested in finding spectral gaps between the {\it cluster bands} $\mathcal{C}(k)$ given by
\begin{equation}
\mathcal{C}(k)=\left[k\left(1-\frac{1}{\Delta}\right),k\left(1+\frac{1}{\Delta}\right)\right]\:,
\end{equation}
where $k\in\N$. (In addition, we define $\mathcal{C}(0):=\{0\}$.) In \cite{FS}, it was shown that 
\begin{equation}
\bigcup_{k\geq 0}\mathcal{C}(k)\subset\sigma(H_\mathcal{G})
\end{equation}
and moreover that for any $N\in\N_0$:
\begin{equation}
\label{eq:maxspecN}
\sup\sigma\left(H_\mathcal{G}^N\right)=N\left(1+\frac{1}{\Delta}\right)\:,
\end{equation}
where $H_\mathcal{G}^N$ is the operator on $\ell^2(\mathcal{V}_N)$ as defined in \eqref{eq:HNinfinite}.

In order to prove the existence of gaps between $\mathcal{C}(k)$ and $\mathcal{C}(k+1)$ in the spectrum of $H_\mathcal{G}$, we will show under certain assumptions on $\Delta$ that for any $N\in\N_0$, there is a gap $\gamma_k(N)$ in the spectrum of $H_\mathcal{G}^N$  between $\mathcal{C}(k)$ and $\mathcal{C}(k+1)$, i.e.
\begin{equation} 
\rho(H_\mathcal{G}^N)\cap\left(k\left(1+\frac{1}{\Delta}\right),(k+1)\left(1-\frac{1}{\Delta}\right)\right)=:\gamma_k(N)\neq \emptyset
\end{equation}
and moreover that 
\begin{equation} \label{eq:unifgap}
\bigcap_{N\geq 0}\gamma_k(N)=:\gamma_k\neq\emptyset\:,
\end{equation}
which means that there is a gap $\gamma_k$ between $\mathcal{C}(k)$ and $\mathcal{C}(k+1)$ in the spectrum of $H_\mathcal{G}$. The case $k=0$ is the well-known existence of a ground state gap for $\Delta>1$,  
\begin{equation} 
\rho(H_\mathcal{G})\cap \left(0,1-\frac{1}{\Delta}\right)=\left(0,1-\frac{1}{\Delta}\right),
\end{equation}
while the case $k=2$ was covered in \cite{FS} by showing that for $\Delta>3$ we have
\begin{equation} 
\rho(H_\mathcal{G})\cap \left(1+\frac{1}{\Delta},2\left(1-\frac{1}{\Delta}\right)\right)=\left(1+\frac{1}{\Delta},2\left(1-\frac{1}{\Delta}\right)\right)\:.
\end{equation}
Thus we will restrict ourselves to the cases $k\geq 2$ from now on.

In view of \eqref{eq:maxspecN}, note that it suffices to only consider the operators $H_\mathcal{G}^N$, where $N\geq k+1$ and show the existence of spectral gaps $\gamma_k(N)$ such that \eqref{eq:unifgap} holds. Also, note that a priori, we have to assume that $\Delta>2k+1$ to ensure that $\sup\:\mathcal{C}(k)<\inf\:\mathcal{C}(k+1)$.

Now fix any $N\geq k+1$ and consider the operator $H_\mathcal{G}^N$ on $\ell^2(\mathcal{V}_N)$. Let us split $\mathcal{V}_N$ into the disjoint union $\mathcal{V}_N=V_1\cup V_2$, where we define $V_2$ to be the set of those configurations $X_2\in V_2$ such that $S(X_2)\leq 2k$. In other words, $\ell^2(V_2)$ is the spectral subspace for which $D_N\leq 2k$ holds. 
The set $V_1\subset\mathcal{V}_N$ is then just defined to be $V_1=\mathcal{V}_N\setminus V_2$ and we have $D_N(X_1)\geq 2k+2$ for any $X_1\in V_1$. Since we are considering the XXZ model, we get $g=\frac{1}{2\Delta}$ and $U=\frac{1}{2}D_N$ for our application of Proposition \ref{gapcrit1}. 

In order to avoid our arguments being too technical, let us also introduce the following terminology:

(i) Let $X\in\mathcal{V}_N$, which means that $X$ is a subset of $\Z$ with exactly $N$ elements. We call any $x\in X$ a {\it particle} and any sequence of consecutive numbers $\{x,x+1,\dots,x+n-1\}\subset X$ an {\it $n$-particle cluster} ($n\in\{1,\dots,N\}$).
In this terminology, the interpretation of $V_2$ is that of the set of all configurations with $k$ or less clusters. This in turn implies that $V_1$ is the set of configurations with $k+1$ or more clusters. Also, note that for any $X\in\mathcal{V}_N$, we have that $\frac{1}{2}S(X)$ is the number of clusters this configuration consists of.

(ii) Given $X,Y\in\mathcal{V}_N$, such that $\{X,Y\}\in\mathcal{E}_N$ or equivalently such that $X\triangle Y\in\mathcal{E}$, this means that there exists a $x\in X\cap Y^c$ and $y\in X^c\cap Y$ such that $Y=(X\setminus\{x\})\cup\{y\}$ (cf.\ Remark~\ref{neighbor}). In this case, we say that the configuration $Y$ is obtained from $X$ through {\it hopping} of the particle at $x$ from $x$ to $y$. In general, given any $X\in\mathcal{V}_N$, when we speak of {\it hopping}, we mean the adjacent sites $\{Y\in\mathcal{V}_N: X\triangle Y\in\mathcal{E}\}$ and think of them as obtained through the different possibilities that particles in $X$ can hop from sites $x\in X$ to $y\in Y\setminus X$. 

Let us now estimate the quantities we need in order to apply this result. 

\begin{itemize}

\item We begin with finding an upper bound for $d_1d_2$. 

Observe that $X\in V_2$, i.e., a configuration with at most $k$ connected clusters, has at most $2k$ neighbors in ${\mathcal G}_N$ with $k+1$ clusters (and no neighbors with more than $k+1$ clusters). The extreme case $2k$ happens if there are $k$ connected clusters, each consisting of at least two particles, and the spatial separation between all clusters is at least $2$. In this case each endpoint of each cluster can hop to exactly one neighbor outside $X$, splitting one cluster into two. Thus $d_1\le 2k$.

On the other hand, observe that hopping can decrease the number of clusters by at most one. It thus suffices to only consider configurations of $k+1$ clusters when trying to find an estimate for $d_2$. 
Now, a decrease of the number of clusters from $k+1$ to $k$ will only happen if there is a one-particle cluster at distance two from another cluster. The extremal case of this situation is given by $k+1$ single particles lined up in a chain with distance two between each pair. Then, the very left and the very right particle each have one way of hopping such that it forms a two-particle cluster with its nearest neighboring particle, while the $k-1$ particles inside have two ways each for hopping and connecting. Altogether, this yields $d_2 \le 2(k-1)+2=2k$. We therefore conclude 
\begin{equation} \label{eq:d1d2}
d_1d_2\leq 4k^2\:.
\end{equation} 

\item Next, we can estimate $ \|A_2\|\leq \|D_2\|\le 2k$, as it was already argued above that configurations $V_2$ can have at most $2k$ next neighbors.

\item Since $U=\frac{1}{2}D_N$ and $g=\frac{1}{2\Delta}$, it is easy to describe 
\begin{equation} \label{eq:classicclusterspec}
\sigma(U_2)=\sigma\left(\frac{1}{2}D_N\upharpoonright_{\ell^2(V_2)}\right)=\{1,2,\dots,k-1,k\}\:,
\end{equation}
since for each configuration $X\in V_2$ with $\ell\leq k$ clusters, we have $D_N(X)=|\partial X|=2\ell$.
\item Finally, we will use that 
\begin{equation}\label{eq:specestim}
\inf_{X\in V_1}\: (U(X)-gD(X))=\inf_{X\in V_1}\:\left(\frac{1}{2}\left(1-\frac{1}{\Delta}\right)D_{N}(X)\right)=(k+1)\left(1-\frac{1}{\Delta}\right)\:,
\end{equation}
which follows immediately from the fact that configurations $X\in V_1$ have $k+1\geq \ell$ clusters again with $|\partial X|=2\ell$. The spectral minimum of $D_N\upharpoonright_{\ell^2}V_1$ is thus attained by configurations with $k+1$ clusters. 

\end{itemize}
Plugging all those quantities into Equation \eqref{gapcrit} then yields that all numbers $E\in (k(1+1/\Delta), (k+1)(1-1/\Delta))$ with the additional property that
\begin{align}
\frac{1}{\Delta^2}<\frac{\left((k+1)\left(1-\frac{1}{\Delta}\right)-E\right)\left(E-k\left(1+\frac{1}{\Delta}\right)\right)}{k^2}
\end{align}
are in the resolvent set of $H_\mathcal{G}^N$. Note that this is uniform in $N$.
Let us now define 
\begin{equation} \label{eq:gapboundary}
\gamma_{\pm}(\Delta,k):=k+\frac{1}{2}\left(1-\frac{1}{\Delta}\right)\pm\frac{1}{2\Delta}\sqrt{(\Delta-1)(\Delta-(4k+1))}
\end{equation}
By an easy calculation, we then get 
\begin{proposition} \label{prop:XXZresolvent}
If $\Delta>4k+1$, then the set 
\begin{equation} \label{eq:XXZresolvent}
\left(\gamma_-(k,\Delta),\gamma_+(k,\Delta)\right)
\end{equation}
is in the resolvent set of the $XXZ$ spin chain Hamiltonian $H_{\mathcal G}$.
\end{proposition}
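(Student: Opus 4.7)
The plan is to apply Proposition~\ref{gapcrit1} to each restricted Hamiltonian $H^N$ with $g = 1/(2\Delta)$, $U = D_N/2$, and the disjoint decomposition $\mathcal{V}_N = V_1 \cup V_2$ introduced above; the resulting gap criterion is then inverted in $E$ to pin down the interval $(\gamma_-(k,\Delta), \gamma_+(k,\Delta))$.

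First I would fix $N \ge k+1$ and substitute the quantities already collected in the bullets preceding the proposition into \eqref{gapcrit}: namely $d_1 d_2 \le 4k^2$, $g\|A_2\| \le k/\Delta$, $\sup_{X \in V_2} U(X) = k$, and $\inf_{X \in V_1}(U(X) - gD(X)) = (k+1)(1 - 1/\Delta)$. Together with $g^2 = 1/(4\Delta^2)$, the common factor of $4$ cancels and condition \eqref{gapcrit} becomes
\begin{equation*}
\frac{k^2}{\Delta^2} < \left((k+1)\left(1-\tfrac{1}{\Delta}\right) - E\right)\left(E - k\left(1+\tfrac{1}{\Delta}\right)\right).
\end{equation*}
The auxiliary hypotheses \eqref{eq:assE1} and \eqref{eq:assE2} reduce to $E \in (k(1+1/\Delta), (k+1)(1-1/\Delta))$, which is automatically satisfied as soon as the right-hand side above is positive.

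Next I would solve this quadratic inequality in $E$. The right-hand side is a downward-opening quadratic whose zeros are exactly the cluster-band endpoints $k(1+1/\Delta)$ and $(k+1)(1-1/\Delta)$, so the quadratic formula gives the admissible roots
\begin{equation*}
E_\pm = k + \tfrac{1}{2}\left(1-\tfrac{1}{\Delta}\right) \pm \tfrac{1}{2}\sqrt{\left(1 - \tfrac{2k+1}{\Delta}\right)^2 - \tfrac{4k^2}{\Delta^2}}.
\end{equation*}
A direct algebraic manipulation reduces the radicand to $(\Delta-1)(\Delta-(4k+1))/\Delta^2$, and the resulting $E_\pm$ then match the definitions of $\gamma_\pm(k,\Delta)$ in \eqref{eq:gapboundary}. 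The radicand is strictly positive precisely when $\Delta > 4k+1$ (using $\Delta>1$), which is the stated hypothesis. Proposition~\ref{gapcrit1} then gives $(\gamma_-(k,\Delta), \gamma_+(k,\Delta)) \subset \rho(H^N)$ uniformly in $N \ge k+1$.

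Finally I would dispose of the small particle-number cases $N \le k$ using \eqref{eq:maxspecN}: there $\sup\sigma(H^N) = N(1+1/\Delta) \le k(1+1/\Delta) < \gamma_-(k,\Delta)$, so the interval lies trivially in $\rho(H^N)$. Since the open interval $(\gamma_-, \gamma_+)$ is disjoint from every $\sigma(H^N)$ and $\sigma(H_{\mathcal G}) = \overline{\bigcup_N \sigma(H^N)}$ for the direct sum $H_{\mathcal G} = \bigoplus_N H^N$, it is disjoint from $\sigma(H_{\mathcal G})$. The substantive work has already been packaged into the earlier bullet-point estimates on $d_1 d_2$, $\|A_2\|$, and the restricted potentials, so what remains is essentially a quadratic computation; the only mildly subtle step is verifying the factorization $(1-(2k+1)/\Delta)^2 - 4k^2/\Delta^2 = (\Delta-1)(\Delta-(4k+1))/\Delta^2$ which is what produces the stated closed form of $\gamma_\pm$ and the threshold $\Delta > 4k+1$.
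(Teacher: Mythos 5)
Your proof is correct and takes the same approach as the paper: you plug the bullet-point estimates into \eqref{gapcrit} and invert the resulting quadratic in $E$, which is exactly the ``easy calculation'' the paper omits. The algebraic check that $\left(1-\tfrac{2k+1}{\Delta}\right)^2-\tfrac{4k^2}{\Delta^2}=\tfrac{(\Delta-1)(\Delta-(4k+1))}{\Delta^2}$ is right, the handling of $N\le k$ via \eqref{eq:maxspecN} is the paper's own reduction, and the passage from uniform resolvent intervals for the $H^N$ to the resolvent set of the direct sum $H_{\mathcal G}$ is correctly justified by the closedness of the complement of the open interval.
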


Note that, for fixed $k$, the intervals $(k(1+1/\Delta),(k+1)(1-1/\Delta))$ and $(\gamma_-(k,\Delta), \gamma_+(k,\Delta))$ have the same center and their endpoints coincide up to $C(k)/\Delta^2$ as $\Delta \to \infty$.


\subsection{The strip of width $M$} \label{sec:strip}

For any $M\in\{2,3,\ldots\}$, let $\mathcal{V}=\Z\times\{1,2,\ldots,M\}$ and let $\mathcal{G}=(\mathcal{V},\mathcal{E})$ be the strip of width $M$ with open boundary conditions, i.e. 
\begin{equation} 
\mathcal{E}=\left\{\{(z_1,m_1),(z_2,m_2)\}: z_1,z_2\in \Z, m_1,m_2\in\{1,2,\dots,M\}, |z_1-z_2|+|m_1-m_2|=1\right\}\:.
\end{equation} 
Moreover, for any $N\in\N_0$, let 
\begin{equation} 
H_N=-\frac{1}{2\Delta}\mathcal{L}_N+\frac{1}{2}\left(1-\frac{1}{\Delta}\right)D_N\:,
\end{equation}
where $\mathcal{L}_N$ is the Laplacian on $\mathcal{G}_N$

For simplicity, we will restrict our considerations to sufficiently large particle numbers that are multiples of $M$. This will ensure that the space of classical droplet configurations $\mathcal{V}_{N,1}$ can be described by rectangular configurations of the form
\begin{equation}
\label{eq:gluehwein}
\mathcal{R}_{z,\ell} := \{(y,m) \in \N \times \{1,\ldots,M\} : z\le y \le z+\ell-1\}\:.
\end{equation}

Hence, for any $M\in\N$, we define 
\begin{equation}
\mathcal{S}_M:=\{\ell M:\ell\in\N, \ell>\frac{M}{2}\}
\end{equation}
and we introduce the restricted Hamiltonian 
\begin{equation}
\widehat{H}_M=\bigoplus_{N\in\mathcal{S}_M}H_\mathcal{G}^N\:.
\end{equation}
For any such $N=\ell M \in\mathcal{S}_M$, the classical droplet configurations are given by rectangles of height $M$ and width $\ell$ (see Lemma \ref{lemma:strip} for a proof),
\begin{equation} 
\mathcal{V}_{N,1}=\{\mathcal{R}_{z,\ell}: z \in \Z\}\:.
\end{equation}
From this it is not hard to obtain the necessary quantities for an application of Proposition \ref{gapcrit1}, where for the sets $V_1$ and $V_2$ into which we decompose $\mathcal{V}_N=V_1\cup V_2$, we choose $V_2=\mathcal{V}_{N,1}$ and $V_1=\mathcal{V}_N\setminus\mathcal{V}_{N,1}$. 
\begin{itemize}
\item Firstly, note that $g=\frac{1}{2\Delta}$, $D=D_N$ and $U=\frac{1}{2}D_N$.
\item Next, from the fact that all elements of $\mathcal{V}_{N,1}$ are of the form $\mathcal{R}_{z,\ell}$, we can immediately determine the quantities $d_1$ and $d_2$. Since for any $\mathcal{R}_{z,\ell}$, there are $2M$ adjacent non-droplet configurations, we have $d_2=2M$. Moreover, since two adjacent configurations $X,Y\in\mathcal{V}_N$ can only differ by at most one element, we get $d_1=1$ and thus $d_1d_2=2M$.  
\item Since no classical droplet configuration in $\mathcal{V}_{N,1}$ is adjacent to another (note that we have assumed that $M>1$), we get $\|A_2\|=0$.
\item Next, by virtue of Lemma \ref{lemma:strip} observe that 
\begin{equation} 
\sup_{x\in V_2}U(x)=\frac{1}{2}\sup_{X\in\mathcal{V}_{N,1}}D_N(X)=M\:.
\end{equation}
\item Finally, we estimate 
\begin{align}
\inf_{x\in V_1}(U(x)-gD(x))&=\inf_{X\in\mathcal{V}_N\setminus\mathcal{V}_{N,1}}\left(\frac{1}{2}\left(1-\frac{1}{\Delta}\right)D_N(X)\right)\\&\geq \left(1-\frac{1}{\Delta}\right)(M+1)\:,
\end{align}
which trivially follows from the fact that $D_N$ is integer-valued and that for any $X\in\mathcal{V}_N\setminus\mathcal{V}_{N,1}$ we have $D_N(X) \ge 2M+2$. (The obvious lower bound $2M+1$ can be improved to $2M+2$, realized by taking one of the four corners of $\mathcal{R}_{z,\ell}$ and attaching it to one of the other three corners, but we haven't worked out a detailed proof of this.)
\end{itemize}

Let $\Delta>2M+1$. By Proposition \ref{gapcrit1}, any $E$ with $M<E<\left(1-\frac{1}{\Delta}\right)(M+1)$ that in addition satisfies
\begin{equation}
\frac{1}{4\Delta^2}<\frac{\left(\left(1-\frac{1}{\Delta}\right)(M+1)-E\right)(E-M)}{2M}
\end{equation}
is not in the spectrum of $H_N$. Similar to the considerations at the end of Section~\ref{gapchain} this shows that all $H_N$ with $N\in S_M$ (and thus also $\widehat{H}_M$) have a common gap whose endpoints coincide with those of $(M,\left(1-\frac{1}{\Delta}\right)(M+1))$ up to an error of at most $C(M)/\Delta^2$.

\subsection{The XXZ model in $\Z^2$} \label{sec:Z2}

Let us now consider $\mathcal{G}=(\mathcal{V,E})$, where $\mathcal{V}=\Z^2$ and $\mathcal{E}=\{\{x,y\}:x,y\in\Z^2, \|x-y\|_{\ell^1}=1\}$ and for simplicity restrict ourselves to particle numbers that are squares. In this case, the classical droplets are given by states $\delta_X$, where $X\subset \Z^2$ are squares, i.e.\ up to translations of the form $X=\{1,2,\dots, \sqrt{N}\}\times\{1,2,\dots,\sqrt{N}\}$ (cf.\ Lemma \ref{lemma:square} for more details). Since the edge boundary $\partial X$ of such a configuration has exactly $4\sqrt{N}$ elements, we have $D_{N,min}=4\sqrt{N}$, from which we get 
\begin{equation} 
H_\mathcal{G}^N\geq2\left(1-\frac{1}{\Delta}\right)\sqrt{N}\:.
\end{equation}
In particular, this means that $\inf\sigma(H_\mathcal{G}^N)\rightarrow\infty$ as $N\rightarrow\infty$. Hence, for this choice of $\mathcal{G}$ it is not possible that there exists a droplet band uniform in the particle number. Nevertheless, for each $N\in\{m^2:m\in\N\}$, let us apply Proposition \ref{gapcrit1} in order to show the existence of a spectral gap of $H_{\mathcal{G}}^N$ (with $N$ fixed) above the droplet band. Since for $N=1$, the operator $H_\mathcal{G}^1$ can be explicitly diagonalized with the use of Fourier transforms with 
\begin{equation} 
\sigma(H_\mathcal{G}^1)=\sigma_{ac}(H_\mathcal{G}^1)=\left[2-\frac{2}{\Delta},2+\frac{2}{\Delta}\right]\:,
\end{equation}
we will exclude this case from now on.

As before, we now need to collect a few needed quantities:
\begin{itemize}
\item For $X\in \mathcal{V}_{N,1}$, we have $d_2=4\sqrt{N}$ and $d_1=1$.
\item Since $N\geq 4$, there is are no square configurations $X$ and $X'$ such that $d_N(X,X')=1$, from which we conclude $\|A_2\|=0$.
\item Clearly, we have $\sigma(U_2)=\{2\sqrt{N}\}$.
\item The second smallest eigenvalue of $D_N$ is given by $\lambda_2=4\sqrt{N}+2$. This follows as one can easily find an explicit $N$-particle configuration with surface area $4\sqrt{N}+2$ and  from the fact that $\mathcal{G}$ is a regular graph and thus Lemma~\ref{lemma:graphhelp} implies that the values of $D_N(X)=S(X)$ have to be even integers.  We thus get
\begin{equation} 
\min\sigma\left(\frac{1}{2}\left(1-\frac{1}{\Delta}\right)D\upharpoonright_{\ell^2(V_1)}\right)=\left(1-\frac{1}{\Delta}\right)(2\sqrt{N}+1)\:.
\end{equation}
Hence, we get that $2\sqrt{N}<E<(1-\frac{1}{\Delta})(2\sqrt{N}+1)$ is not in the spectrum if it satisfies
\begin{align}
\frac{1}{4\Delta^2}<\frac{\left(\left(1-\frac{1}{\Delta}\right)(2\sqrt{N}+1-E\right)(E-2\sqrt{N})}{4\sqrt{N}}\:.
\end{align}
\end{itemize}
Again, for $\Delta$ sufficiently large this gives a gap almost as big as $(2\sqrt{N}, (1-1/\Delta)(2\sqrt{N}+1))$. However, we can not produce gaps in $H$, not even for its restriction to particle numbers which are squares, as the gap location now moves as $\sqrt{N}$. 

\section{Combes-Thomas bounds} \label{sec:CTbound}

The discussion of the existence of droplet spectrum would not be complete without also giving a corresponding description of eigenstates to energies in this part of the spectrum. What one would like to show is that these eigenfunctions are close to the ``classical'' droplet states $\varphi_X$, where $X$ is a droplet configuration, i.e., contained in ${\mathcal V}_1 = \bigcup_N {\mathcal V}_{N,1}$. We will accomplish this by providing a Combes-Thomas-type bound in Section~\ref{subCTbound}, which shows that, for given $N$, the Green function of (a suitably modified version of) $H^N$ for energies in the droplet spectrum decays exponentially in the $d_N$-distance from ${\mathcal V}_{N,1}$. This generalizes results in \cite{EKS1} where such bounds were proven for the case ${\mathcal V}=\Z$ (see also \cite{BW1} for a related Combes-Thomas bound).  A quick calculation in Section~\ref{efdecay} will then show that this implies a similar decay bound for eigenstates to droplet energies and, if the droplet spectrum is separated by a gap from higher spectral bands, also for the spectral projection onto the full droplet spectrum.

Throughout this section we will actually state and prove results which do not just apply to the droplet spectrum, but to energies below an increasing sequence of $k$-threshold energies, which correspond to ``multi-droplets'', i.e., states close to classical configurations with a fixed number $k$ of connected components of down-spins.

\subsection{Degree growth between neighbors in $\mathcal{G}_N$}

The exponential decay rates in ``standard'' Combes-Thomas bounds for Schr\"odinger-type operators on $\Z^N$ are inversely proportional to the number $2N$ of next neighbors of lattice sites, e.g.\ \cite{Kirsch}, and thus do not give the uniformity in the particle number $N$ desired here. However, as we will show next, due to \eqref{degsur} the {\it degree growth} between next neighbors in ${\mathcal G}_N$ satisfies an $N$-independent bound. This property will be critical in the proof of the Combes-Thomas bound near droplet energies in Section~\ref{subCTbound}.

\begin{lemma} \label{lem:deggrow}
Let ${\mathcal G}= ({\mathcal V}, {\mathcal E})$ be a countable, undirected and connected graph of bounded degree $d=\sup_{x\in {\mathcal V}} d(x) <\infty$. 
Let $X, Y \subset {\mathcal V}$ such that $X \Delta Y \in {\mathcal E}$. Then either $|\partial X|$ and $|\partial Y|$ are both infinite, or
\begin{equation} \label{XYcomp}
\left| |\partial X| - |\partial Y| \right| \le 2d-1,
\end{equation}
in which case also 
\begin{equation} \label{XYcomp2}
|\partial X|^{-1} \le 2d |\partial Y|^{-1}.
\end{equation}
\end{lemma}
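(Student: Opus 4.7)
The plan is to localize the differences between $\partial X$ and $\partial Y$ to edges incident to the two-element set $X \Delta Y$, and then count. Writing $X \Delta Y = \{u, v\}$ with $\{u, v\} \in \mathcal{E}$, the crucial observation is that $X$ and $Y$ agree on $\mathcal{V} \setminus \{u, v\}$; so for any edge $e \in \mathcal{E}$ disjoint from $\{u, v\}$ both endpoints have identical $X$-membership and $Y$-membership, whence $e \in \partial X \iff e \in \partial Y$.

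The first step is therefore the set-theoretic inclusion
\[
\partial X \,\Delta\, \partial Y \;\subseteq\; \{e \in \mathcal{E} : e \cap \{u, v\} \neq \emptyset\}.
\]
The right-hand side consists of the edges of $\mathcal{G}$ incident to $u$ or to $v$, and its cardinality is $d(u) + d(v) - 1 \le 2d - 1$, where the $-1$ corrects for the edge $\{u, v\}$ being counted once at each of its endpoints. Combining this with $\partial X \subseteq \partial Y \cup (\partial X \,\Delta\, \partial Y)$ and the symmetric inclusion yields $|\partial X| \le |\partial Y| + (2d - 1)$ and vice versa. This simultaneously establishes the dichotomy (both boundaries infinite, or both finite) and gives \eqref{XYcomp} in the finite case.

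For \eqref{XYcomp2}, I would note that under the standing hypothesis that $\mathcal{G}$ is connected and the configurations of interest are nonempty proper subsets of $\mathcal{V}$, one has $|\partial X|, |\partial Y| \ge 1$. Combining $|\partial Y| \le |\partial X| + (2d - 1)$ with $|\partial X| \ge 1$ gives
\[
|\partial Y| \;\le\; |\partial X| + (2d - 1)|\partial X| \;=\; 2d \, |\partial X|,
\]
and taking reciprocals yields \eqref{XYcomp2}. I do not expect a serious obstacle here; the only delicate accounting is the single $-1$ that comes from the shared edge $\{u, v\}$, which is precisely what produces the sharp constant $2d - 1$ instead of $2d$.
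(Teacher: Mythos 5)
Your proposal is correct and follows essentially the same route as the paper: there, one introduces $F = \{e \in \mathcal{E} : e \cap \{x_0,y_0\} \neq \emptyset\}$, bounds $|F| \le 2d-1$, and shows $\partial Y \setminus F \subseteq \partial X$, which is just the one-sided version of your inclusion $\partial X \,\Delta\, \partial Y \subseteq F$; the reciprocal bound is then obtained the same way, using $|\partial X| \ge 1$. The only cosmetic difference is how that last positivity is justified: the paper observes directly that the edge $X \Delta Y = \{x_0,y_0\}$ itself lies in both $\partial X$ and $\partial Y$ (since $x_0 \in X\setminus Y$ and $y_0 \in Y \setminus X$ in the intended equal-cardinality setting), which is a bit cleaner than your appeal to connectedness and properness, but the content is the same.
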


\begin{proof}
Note that $X\Delta Y \in {\mathcal E}$ means that there exist $x_0 \in X$ and $y_0 \in Y$ such that $\{x_0,y_0\} \in {\mathcal E}$ and $X\setminus \{x_0\} = Y \setminus \{y_0\}$. In particular, $x_0 \not\in Y$ and $y_0 \not\in X$, $\partial X$ and $\partial Y$ are both non-empty (each contains $\{x_0,y_0\}$), and $|\partial X|$ and $|\partial Y|$ are either both infinite or both finite.
In the latter case, to show \eqref{XYcomp} it suffices to prove
\begin{equation} \label{YleX}
|\partial Y| \le |\partial X| +2d-1.
\end{equation} 
 Consider
\begin{equation} 
F:= \{\{u,v\}\in \mathcal{E}: \{x_0,y_0\} \cap \{u,v\} \not= \emptyset\}.
\end{equation}
Then 
\begin{equation} \label{Fcard}
|F|\le 2d-1
\end{equation}
as at most $d$ edges each contain either $x_0$ or $y_0$, where $\{x_0,y_0\}$ is counted twice.

Suppose $\{u,v\} \in \partial Y \setminus F$, i.e., without loss, $u\in Y$ and $v\in Y^c$, as well $\{u,v\} \cap \{x_0,y_0\} = \emptyset$. As $X\Delta Y \in {\mathcal E}$, this implies $u \in Y\setminus\{y_0\} \subset X$ as well as $v\in (Y \cup \{x_0\})^c \subset X^c$. We have shown $\{u,v\} \in \partial X$. Thus \eqref{YleX} follows from \eqref{Fcard}. This also gives the last claim,
\begin{equation} 
\frac{1}{|\partial Y|} = \frac{1}{|\partial X|} \frac{|\partial X|}{|\partial Y|} \le \frac{1}{|\partial X|} \frac{|\partial Y|+2d-1}{|\partial Y|} \le \frac{2d}{|\partial X|}.
\end{equation}
\end{proof}

Due to \eqref{eq:georel} and \eqref{degsur}, $||\partial X|-|\partial Y||=2d-1$ is not possible on a regular graph, so in this case one must have $||\partial X|-|\partial Y||\le 2d-2$. 

\subsection{Combes-Thomas bounds} \label{subCTbound}

Recall that the operators $H^N$ are defined by \eqref{HNrep} or \eqref{HNrep2}, for any $N\ge 1$ and countable, connected directed graph ${\mathcal G} = ({\mathcal V}, {\mathcal E})$ of degree bounded by $d$. As before, we denote the graph distance on ${\mathcal G}_N = ({\mathcal V}_N, {\mathcal E}_N)$ by $d_N$, see \eqref{distN}. For $X\in {\mathcal V}_N$ and subsets ${\mathcal A}, {\mathcal B} \subset {\mathcal V}_N$ we define
\begin{equation}
d_N({\mathcal A},X) = \min_{Y\in {\mathcal A}} d_N(Y,X) \:\: \mbox{and} \:\:d_N({\mathcal A}, {\mathcal B}) = \min_{X\in {\mathcal A}, Y \in {\mathcal B}} d_N(X,Y).
\end{equation}
As $d_N$ is an integer-valued metric, it is clear that both minima exist and that
\begin{equation} \label{triangle}
|d_N({\mathcal A},X) - d_N({\mathcal B},Y)| \le d_N(X,Y).
\end{equation}

We will now state and prove the Combes-Thomas bound for energies in the droplet spectrum. In fact, our result also applies to higher energies, as long as we assume an upper bound of the form $D_{N,min}+k$ on the corresponding classical energies (i.e., of the operators $D_N$). Here 
\begin{equation} 
D_{N,min} := \min _{X\in \mathcal{V}_N} D(X)
\end{equation} 
and $k\in \N_0$. Let
\begin{equation}
\mathcal{V}_{N,k} := \{X\in \mathcal{V}_N: D(X) < D_{N,min}+k\}
\end{equation}
and
\begin{equation} 
\bar{\mathcal{V}}_{N,k} := \mathcal{V}_N \setminus \mathcal{V}_{N,k} = \{ X\in \mathcal{V}_N: D(X) \ge D_{N,min} +k\}.
\end{equation}
By $P_{N,k}$ and $\bar{P}_{N,k} = I-P_{N,k}$ we denote the orthogonal projections onto $\ell^2(\mathcal{V}_{N,k})$ and $\ell^2(\bar{\mathcal{V}}_{N,k})$, respectively. 

We fix $\lambda \ge k(1-\frac{1}{\Delta})/2$ and consider the operator $H_N + \lambda P_{N,k}$, noting that
\begin{eqnarray} \label{bla1}
H_N + \lambda P_{N,k} & \ge & -\frac{1}{2\Delta} \mathcal{L}_N + \frac{1}{2}\left(1-\frac{1}{\Delta}\right)(D_N +kP_{N,k}) \\& \ge & \frac{1}{2}\left(1-\frac{1}{\Delta}\right)(D_{N,min}+k). \notag
\end{eqnarray}
If we fix 
\begin{equation} \label{bla2}
0< \delta \le \frac{D_{N,min}+k}{2}
\end{equation} 
and an energy 
\begin{equation} \label{bla3}
E\le \left(1-\frac{1}{\Delta}\right)\left(\frac{D_{N,min}+k}{2} -\delta\right),
\end{equation}
then $E$ lies strictly below the spectrum of $H_N+\lambda P_{N,k}$. Our goal is to prove the following Combes-Thomas bound for the resolvent $(H_N + \lambda P_{N,k} -E)^{-1}$: 

\begin{theorem} \label{thm:CTgen}
For any $1\le N < |\mathcal{V}|$, $k\in \N_0$, $\delta$ and $E$ as in \eqref{bla2} and \eqref{bla3}, $\lambda \ge k(1-\frac{1}{\Delta})/2$, and for arbitrary subsets ${\mathcal A}$ and ${\mathcal B}$ of ${\mathcal V}_N$ it holds that
\begin{equation} \label{CTbound}
\| \chi_{\mathcal A} (H_N + \lambda P_{N,k} -E)^{-1} \chi_{\mathcal B}\| \le \frac{4D_{N,min}}{\delta(1-1/\Delta)} \left( \frac{\delta(\Delta-1)D_{N,min}}{(D_{N,min}+k)^2 \sqrt{d}} + 1 \right)^{-d_N({\mathcal A}, {\mathcal B})}.
\end{equation}
\end{theorem}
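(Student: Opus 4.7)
Set $H_0 := H^N + \lambda P_{N,k} - E$ and perform a Combes--Thomas argument adapted to the symmetric product graph $\mathcal G_N$. Introduce the $1$-Lipschitz weight function
\[
\eta(X) := \min\{d_N(\mathcal A, X),\ d_N(\mathcal A, \mathcal B)\},
\]
which vanishes on $\mathcal A$ and equals $d_N(\mathcal A, \mathcal B)$ on $\mathcal B$, and let $U_\alpha$ denote multiplication by $e^{\alpha\eta}$ on $\ell^2(\mathcal V_N)$ for $\alpha>0$. Because $D_N$ and $P_{N,k}$ are multiplication operators, they commute with $U_\alpha$, so only the adjacency term picks up a perturbation:
\[
U_\alpha^{-1} H_0\, U_\alpha = H_0 - \frac{T}{2\Delta}, \qquad T(X,Y) = \bigl(e^{\alpha(\eta(Y)-\eta(X))} - 1\bigr)\, \mathbf{1}_{\{X\triangle Y \in \mathcal E\}}.
\]
The $1$-Lipschitz property of $\eta$ gives $|T(X,Y)| \le (e^\alpha - 1)\, \mathbf{1}_{\{X\triangle Y \in \mathcal E\}}$.

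The proof rests on two a priori estimates. First, since $H^N \ge \frac{1}{2}(1-\frac{1}{\Delta}) D_N$ (from \eqref{HNrep2} together with $-\mathcal L_N \ge 0$), the assumption $\lambda \ge (1-\frac{1}{\Delta})k/2$, and the upper bound \eqref{bla3} on $E$, a case analysis on the decomposition $\mathcal V_N = \mathcal V_{N,k} \cup \bar{\mathcal V}_{N,k}$ produces the form comparison
\[
H_0 \ge \frac{(1-1/\Delta)\delta}{D_{N,min}+k}\, D_N, \qquad \text{whence also}\qquad H_0 \ge \Bigl(1 - \frac{1}{\Delta}\Bigr)\delta.
\]
Second, a Schur test applied to the symmetric kernel $T(X,Y)/\sqrt{S(X)S(Y)}$, using Lemma~\ref{lem:deggrow} to control the degree ratios $S(Y)/S(X)$ between neighbours in $\mathcal G_N$, yields an $N$-independent bound
\[
\bigl\| D_N^{-1/2}\, T\, D_N^{-1/2}\bigr\| \le C(d)\, (e^\alpha - 1),
\]
with $C(d)$ depending only on the maximum degree $d$ of $\mathcal G$. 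This is the critical use of Lemma~\ref{lem:deggrow}: the naive bound $\|T\| \le (e^\alpha - 1)\max_X S(X)$ grows with $N$ and would destroy any uniform-in-$N$ conclusion.

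Combining the two estimates,
\[
\Bigl\| H_0^{-1/2}\, \frac{T}{2\Delta}\, H_0^{-1/2}\Bigr\| \le \frac{C(d)(e^\alpha - 1)(D_{N,min}+k)}{2(\Delta-1)\delta},
\]
and choosing $\alpha = \log(1+r)$ with $r$ of the order dictated by \eqref{CTbound} makes this quantity at most $1/2$. A Neumann series argument then yields invertibility of $U_\alpha^{-1} H_0\, U_\alpha$ with $\|U_\alpha^{-1}(H_0)^{-1} U_\alpha\| \le 2\|H_0^{-1}\| \le 2/((1-\tfrac{1}{\Delta})\delta)$. Since $\chi_{\mathcal A} U_\alpha = \chi_{\mathcal A}$ and $U_\alpha^{-1}\chi_{\mathcal B} = e^{-\alpha d_N(\mathcal A, \mathcal B)}\chi_{\mathcal B}$ by construction of $\eta$, the factorisation
\[
\chi_{\mathcal A}(H_0)^{-1}\chi_{\mathcal B} = \chi_{\mathcal A}U_\alpha \cdot U_\alpha^{-1}(H_0)^{-1} U_\alpha \cdot U_\alpha^{-1}\chi_{\mathcal B}
\]
delivers \eqref{CTbound} after accounting for the numerical constants.

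The main obstacle is to secure a decay rate uniform in the particle number $N$. A textbook Combes--Thomas estimate would give a rate $\alpha \sim 1/\|A_N\|$, which deteriorates as $1/N$ and is useless for our purposes. The key is to bound the perturbation $T$ in a form sense against the potential $D_N$ rather than as a bounded operator on $\ell^2(\mathcal V_N)$, which is precisely where Lemma~\ref{lem:deggrow} is indispensable. The form comparison $H_0 \gtrsim D_N/(D_{N,min}+k)$ then converts this local estimate into a gap-controlled exponential decay rate that depends only on the surface-measure scale $D_{N,min}$ (and on $\delta, \Delta, k, d$), not on the bulk particle number~$N$.
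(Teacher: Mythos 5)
Your argument reproduces the paper's proof almost step for step: conjugation by an exponential weight, a form comparison of the regularized Hamiltonian against the degree operator, the critical use of Lemma~\ref{lem:deggrow} (rather than a naive $\|A_N\|$ bound) to make the conjugated perturbation small uniformly in $N$, and a Neumann series to conclude. The only cosmetic differences are your truncated weight $\eta$ in place of the paper's unbounded $\rho_{\mathcal A}$, sandwiching with $D_N^{\pm 1/2}$ rather than $D_k^{\pm 1/2}$ where $D_k = D_N - k\bar P_{N,k}$ (which merely shuffles factors of $(D_{N,\min}+k)/D_{N,\min}$ but lands on the same estimate), and a small slip: $H_0 \ge (1-\tfrac1\Delta)\delta$ follows directly from \eqref{bla1}--\eqref{bla3}, not (``whence'') from your displayed form comparison $H_0\ge \frac{(1-1/\Delta)\delta}{D_{N,\min}+k}D_N$, which on its own only yields the weaker $H_0 \ge (1-\tfrac1\Delta)\delta\,D_{N,\min}/(D_{N,\min}+k)$.
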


\begin{proof}
As $N$ will be fixed throughout this argument we temporarily drop it from the notation, writing $H=H_N$, $D=D_N$, $D_{min}=D_{N,min}$, $P_k = P_{N,k}$ and $R_{E,k}:= (H+\lambda P_k-E)^{-1}$. 

From the first line of \eqref{bla1} and \eqref{bla3} we get
\begin{eqnarray}
H+\lambda P_k -E & \ge & \frac{1}{2}\left(1-\frac{1}{\Delta}\right) (D+kP_k - D_{min} - k +2\delta) \\
& = & \frac{1}{2}\left(1-\frac{1}{\Delta}\right) (D_k -D_{min}+2\delta), \notag
\end{eqnarray}
where we abbreviate $D_k := D-k\bar{P}_k$. On the other hand, by \eqref{bla2},
\begin{eqnarray}
\frac{\delta D_k}{D_{min}+k} & = & \frac{\delta}{D_{min}+k} (D_k -D_{min}) + \frac{\delta D_{min}}{D_{min}+k} \\
& \le & \frac{1}{2}(D_k -D_{min}+2\delta). \notag
\end{eqnarray}
Combining the last two bounds and noting that $D_k \ge D_{min} \ge 1$, we get
\begin{equation}
\frac{\delta(1-\frac{1}{\Delta})}{D_{min}+k} \le D_k^{-1/2} (H+\lambda P_k -E) D_k^{-1/2}
\end{equation}
or
\begin{equation} \label{doubleplus}
D_k^{1/2} R_{E,k} D_k^{1/2} \le \frac{D_{min}+k}{\delta(1-\frac{1}{\Delta})}.
\end{equation}

For $X\in {\mathcal V}_N$ and ${\mathcal A} \subset {\mathcal V}_N$ let $\rho_{\mathcal A}(X) = d_N({\mathcal A},X)$ and interpret $\rho_{\mathcal A}$ as a multiplication operator on $\ell^2({\mathcal V}_N)$. For $\eta>0$ define the dilation
\begin{equation} 
H_{\eta} := e^{-\eta \rho_{\mathcal A}} H e^{\eta \rho_{\mathcal A}} 
\end{equation}
of $H$ and let $K_{\eta}:= H_{\eta}-H$.  We write $X\sim Y$ to denote next neighbors in ${\mathcal G}_N$ (i.e., $X, Y \in {\mathcal V}_N$, $X\Delta Y \in {\mathcal E}$). A calculation shows that
\begin{equation}
(K_{\eta} \psi)(X) = \frac{1}{2\Delta} \sum_{Y\in {\mathcal V}_N, Y\sim X} \left(1-e^{\eta (\rho_{\mathcal A}(Y) - \rho_{\mathcal A}(X))}\right) \psi(Y)
\end{equation}
for all $\psi \in \ell^2({\mathcal V}_N)$ and $X\in {\mathcal V}_N$. For $X \sim Y$ we have by \eqref{triangle} that $|\rho_{\mathcal A}(X)-\rho_{\mathcal A}(Y)| \le 1$. Thus $|f_{\eta}(X,Y)| \le e^{\eta}-1$, where $f_{\eta}(X,Y) := 1-e^{\eta (\rho_{\mathcal A}(Y) - \rho_{\mathcal A}(X))}$.

We continue by following the calculation (4.16) in \cite{EKS1}:
\begin{eqnarray}
\lefteqn{\langle D^{-1/2} K_\eta D^{-1/2} \psi, D^{-1/2} K_{\eta} D^{-1/2} \psi \rangle} \\
& = & \frac{1}{4\Delta^2} \sum_{X,Y,W\in {\mathcal V}_N, X\sim Y, X\sim W} D^{-1}(X) D^{-1/2}(Y) \times  \notag \\ & &  D^{-1/2}(W) f_{\eta}(X,W) f_{\eta}(X,Y) \psi(Y) \overline{\psi(W)} \notag \\
& \le & \frac{1}{4\Delta^2} (e^{\eta}-1)^2 \sum_{X,Y,W \in {\mathcal V}_N, X\sim Y, X\sim W} D^{-1}(X) D^{-1}(Y) |\psi(Y)|^2, \notag
\end{eqnarray}
which has used that 
\begin{equation} 
|D^{-1/2}(Y)D^{-1/2}(W) \psi(Y) \overline{\psi(W)}| \le \frac{1}{2} D^{-1}(Y) |\psi(Y)|^2 + \frac{1}{2}D^{-1}(W)|\psi(W)|^2.
\end{equation} 
Now using $\sum_{W\in {\mathcal V}_N:\sim X} 1 = D(X)$ (another way of writing the definition of $D(X)$), we see that the latter is equal to
\begin{eqnarray}
\lefteqn{\frac{1}{4\Delta^2} (e^{\eta}-1)^2 \sum_{X,Y\in {\mathcal V}_N, Y\sim X} D^{-1}(Y) |\psi(Y)|^2} \\ 
& \le & \frac{d}{2\Delta^2} (e^{\eta}-1)^2 \sum_{X\in {\mathcal V}_N} \left( \sum_{Y\in {\mathcal V}_N, X\sim Y} D^{-1}(X) \right) |\psi(Y)|^2 \notag \\
& = & \frac{d}{2\Delta^2} (e^{\eta}-1)^2 \|\psi\|^2 \notag
\end{eqnarray}
for all $\psi \in \ell^2({\mathcal V}_N)$, where we have used $D(Y)^{-1} \le 2d D(X)^{-1}$ if $Y\sim X$ (which is \eqref{XYcomp2} expressed in terms of \eqref{degsur}) in the second-to-last step. Thus we arrive at
\begin{equation} \label{doubleminus0}
\|D^{-1/2} K_{\eta} D^{-1/2}\| \le \frac{e^{\eta}-1}{\Delta} \sqrt{d/2}.
\end{equation}
Also noting that $D/D_k$ is a multiplication operator with $\|D/D_k\| \le 1+k/D_{min}$, this further yields
\begin{equation} \label{doubleminus}
\|D_k^{-1/2} K_{\eta} D_k^{-1/2}\| \le \frac{e^{\eta}-1}{\Delta} \sqrt{d/2} \,\Big(1+\frac{k}{D_{min}} \Big).
\end{equation}

Combining \eqref{doubleplus} and \eqref{doubleminus} we arrive at the bound
\begin{eqnarray}
\|D_k^{-1/2} K_{\eta} R_{E,k}D_k^{1/2}\| & = & \|D_k^{-1/2} K_{\eta} D_k^{-1/2} D_k^{1/2}R_{E,k} D_k^{1/2}\|  \\
& \le & \frac{(e^{\eta}-1) \sqrt{d/2} \,(D_{min}+k)^2}{\delta (\Delta-1) D_{min}} = \frac{1}{\sqrt{2}}, \notag
\end{eqnarray}
where we have now made the explicit choice
\begin{equation}
\eta := \log \left( 1+\frac{\delta(\Delta-1) D_{min}}{\sqrt{d} (D_{min}+k)^2} \right).
\end{equation}

Let $R_{\eta,E,k} := (H_{\eta}+\lambda P_k-E)^{-1} = e^{-\eta \rho_{\mathcal A}} R_{E,k} e^{\eta \rho_{\mathcal A}}$. Using that $\|A\|<1$ implies $\|(I+A)^{-1}\| \le 1/(1-\|A\|)$ we obtain
\begin{eqnarray}
\|D_k^{1/2} R_{\eta,E,k} D_k^{1/2}\| & = & \|D_k^{1/2} R_{E,k} D_k^{1/2} (I+D_k^{-1/2} K_{\eta} R_{E,k} D_k^{1/2})^{-1}\| \\
& \le & \frac{1}{1-1/\sqrt{2}} \|D_k^{1/2} R_{E,k} D_k^{1/2}\| \notag \\
& \le & \frac{4D_{N,min}}{\delta(1-1/\Delta)} \notag
\end{eqnarray}
from \eqref{doubleplus}. This allows to conclude the proof via
\begin{eqnarray}
\|\chi_{\mathcal A} R_{E,k} \chi_{\mathcal B} \| & \le & \|\chi_{\mathcal A} D_k^{1/2} R_{E,k} D_k^{1/2} \chi_{\mathcal B}\| \\
& = & \| \chi_{\mathcal A} e^{\eta \rho_{\mathcal A}} D_k^{1/2} R_{\eta,E,k} D_k^{1/2} e^{-\eta \rho_{\mathcal A}} \chi_{\mathcal B}\| \notag \\
& \le & \frac{4D_{min}}{\delta(1-1/\Delta)} e^{-\eta d_N({\mathcal A}, {\mathcal B})} \notag \\
& = & \frac{4D_{min}}{\delta(1-1/\Delta)} \left( \frac{\delta(\Delta-1)D_{min}}{\sqrt{d}(D_{min}+k)^2} + 1 \right)^{-d_N({\mathcal A}, {\mathcal B})}. \notag
\end{eqnarray}

\end{proof}

We include several remarks on modifications/extensions of the above result:

\begin{remark} \label{rem:proj}
{\rm In Theorem~\ref{thm:CTgen} we have regularized the resolvent at energies below the $k$-threshold as in \eqref{bla3} by adding $\lambda P_{N,k}$ to the Hamiltonian. A similar effect can be obtained by projecting the full Hamiltonian onto the complementary space $\ell^2({\bar{\mathcal V}}_{N,k})$. Thus, without going into detail, we mention that a bound similar to \eqref{CTbound} can be found for $\| \chi_{\mathcal A} (\bar{P}_{N,k}(H_N -E)\bar{P}_{N,k})^{-1} \chi_{\mathcal B}\|$ (see \cite{EKS1} for the case of the chain).}
\end{remark}

\begin{remark} \label{rem:field}
{\rm Theorem~\ref{thm:CTgen} generalizes to the case of an XXZ system in an additional exterior field as in Remark~\ref{rem:extfield}, as long as we assume non-negativity $v(x)\ge 0$ for all $x\in {\mathcal V}$. This means that all $N$-body potentials defined via \eqref{eq:potentialdef} are non-negative, so that the proof of the Combes-Thomas bound goes through without any effect on constants. Thus \eqref{CTbound} holds uniformly with respect to the addition of arbitrary positive exterior fields.

A particular situation where this is useful is in the presence of {\it boundary fields}, as encountered when studying an XXZ system on an infinite graph as the ``thermodynamic limit'' of its restrictions to finite subgraphs (as done, e.g., in \cite{EKS1} and \cite{BW1}). Choosing these boundary fields positive will thus allow for Combes-Thomas bounds which are volume independent and stable under the thermodynamic limit.} 
\end{remark}

\begin{remark} \label{rem:complex}
{\rm The bound \eqref{bla3} easily extends to complex energy, i.e., to $\| \chi_{\mathcal A} (H_N + \lambda P_{N,k} -E-i\epsilon)^{-1} \chi_{\mathcal B}\|$ for arbitrary $\epsilon \in \R$, with only an extra factor $2$ in the universal constant on the right hand side. This is seen in the same way as for the case ${\mathcal G} = \Z$ in Section~4 of \cite{EKS1}. This extension also applies in the situations considered in Remarks~\ref{rem:proj} and \ref{rem:field} above.}
\end{remark}

\subsection{Decay bounds on eigenstates and spectral projections} \label{efdecay}

Finally, we show how the above Combes-Thomas bounds imply decay of eigenstates and spectral projections associated with energies below the energies
\begin{equation}
E_{N,k} := \frac{1}{2}\left(1-\frac{1}{\Delta} \right) (D_{N,min}+k), \quad k=1,2,\ldots,
\end{equation}
 which appeared as lower bounds in \eqref{bla1} and play the role of a sequence of $k$-{\it thresholds} for the energies of states characterized by multiple connected components of down-spins. 
 
 \subsubsection{Eigenstates} \label{sub:ef}
 If 
\begin{equation}
E < E_{N,k}
\end{equation}
is an eigenvalue of $H_N$ strictly below one of these thresholds, i.e., $(H_N-E)\psi=0$ for an eigenfunction $\psi \in \ell^2({\mathcal V}_N)$, then we can use Theorem~\ref{thm:CTgen} to show that $\psi$ is concentrated on ${\mathcal V}_{N,k}$, up to exponentially decaying `tails'.  For this choose $\delta = \frac{\Delta}{\Delta-1}(E_{N,k}-E)$, so that \eqref{bla3} holds as an equality. Also, let $\lambda := k(1-\frac{1}{\Delta})/2$ be the smallest constant such that \eqref{bla1} holds. Then, for any ${\mathcal A} \subset {\mathcal V}_N$, Theorem~\ref{thm:CTgen} yields
\begin{eqnarray} \label{eq:esdecay}
\|\chi_{\mathcal A} \psi \| & = & \| \chi_{\mathcal A} (H_N+\lambda P_{N,k}-E)^{-1} (H_N+\lambda P_{N,k}-E) \psi\| \\
& = & |\lambda| \| \chi_{\mathcal A} (H_N+\lambda P_{N,k}-E)^{-1} \chi_{{\mathcal V}_{N,k}}\| \|\chi_{{\mathcal V}_{N,k}} \psi\| \notag \\
& \le & \frac{2(\Delta-1)kD_{N,min}}{\Delta (E_{N,k}-E)} e^{-\gamma d_N({\mathcal A}, {\mathcal V}_{N,k})} \|\chi_{{\mathcal V}_{N,k}} \psi\| \notag
\end{eqnarray}
with 
\begin{equation}
\gamma = \log \left( 1 + \frac{(\Delta-1)(E_{N,k}-E) D_{N,min}}{\sqrt{d} (D_{N,min}+k)^2} \right) .
\end{equation}

Let us discuss some of the properties and special cases of this bound:

(i) For fixed $N$, $k$ and $E<E_{N,k}$, the dependence of \eqref{eq:esdecay} on $\Delta \in (1,\infty)$ takes the form
\begin{equation}
\|\chi_{\mathcal A} \psi \| \le C_1 e^{-\gamma_1 (\Delta-1) d_N({\mathcal A}, {\mathcal V}_{N,k})} \|\chi_{{\mathcal V}_{N,k}} \psi\|.
\end{equation}
with suitable positive constants $C_1$ and $\gamma_1$.

(ii) For fixed $\Delta>1$, $N$ and $k$, the asymptotic behavior for $E \nearrow E_{N,k}$ is given by
\begin{equation}
\|\chi_{\mathcal A} \psi \| \lesssim \frac{C_2}{E_{N,k}-E} e^{-\gamma_2 (E_{N,k}-E) d_N({\mathcal A}, {\mathcal V}_{N,k})} \|\chi_{{\mathcal V}_{N,k}} \psi\|
\end{equation}
with positive constants $C_2$ and $\gamma_2$.

(iii) For $k=1$ we have that $\chi_{{\mathcal V}_{N,1}} \psi$ is the part of $\psi$ which is given as a superposition of classical (strict) droplets, i.e., spin product states where all down-spins form only one connected cluster. The bound \eqref{eq:esdecay} then expresses the exponential closeness of eigenstates to energies in the droplet spectrum to linear combinations of strict droplets, with $d_N$ providing the distance measure.

(iv) For the concrete example of a strip of width $M$ and particle number $N=\ell M$, $\ell=1,2,\ldots$, worked out in Section~\ref{sec:strip}, the minimal surface areas $D_{N,min}$ and thus the threshold energies $E_{N,k}$ do not depend on $N$, so that the Combes-Thomas bounds are uniform in $N$. While we didn't work out any more subtle examples, it becomes clear that $N$-independence of the bounds is essentially a consequence of (quasi) one-dimensionality of the graph $\mathcal{G}$. In fact, the latter could be defined as $\sup_N D_{N,min}<\infty$, the property that surface areas of ``balls'' don't grow with the volume. For graphs $\mathcal{G}$ with this property bounds such as \eqref{CTbound} and \eqref{eq:esdecay} establish the existence of a droplet regime, with states corresponding to arbitrarily large droplets, at the bottom of the spectrum of the XXZ spin system on $\mathcal{G}$. 

Thus, while not working out any other detailed examples, we conclude that our initial question for the existence of a (many-body) droplet regime near the bottom of the spectrum of an XXZ system is answered by the need of quasi-one-dimensionality of the graph. The example $\mathcal{G}=\Z^2$ worked out in Section~\ref{sec:Z2} illustrates that for higher-dimensional XXZ systems the low energy spectrum corresponds to droplets of small particle number and thus is not a many-body energy regime for the model.

\subsubsection{Spectral projections} \label{sub:sp} 
To also get a decay bound on spectral projections of the spectrum of $H_N$ below the thresholds $E_{N,k}$ we require that $H_N$ has a spectral gap below $E_{N,k}$, as it holds for example in the special cases discussed in Section~\ref{sec:examples} if $\Delta$ is sufficiently large. More precisely, assume that 
\begin{equation} \label{eq:dropgap} \left(\frac{1}{2}\Big(1-\frac{1}{\Delta}\Big)(D_{N,min}+k-\delta), E_{N,k}\right) 
\end{equation}
lies in a spectral gap of $H_N$ for some $\delta>0$. Let $Q$ denote the spectral projection for $H_N$ onto $(0,E_{N,k})$. We will prove a bound on $\| \chi_{{\mathcal A}} Q\|$ for any ${\mathcal A} \subset \bar{\mathcal V}_{N,k}$, which decays in the distance of $\mathcal A$ from ${\mathcal V}_{N,k}$. This follows by adapting an argument which was provided in a similar context in \cite{BW2}.

The existence of the spectral gap \eqref{eq:dropgap} allows to choose a positively oriented integration contour $\Gamma$ around $\sigma(H_N) \cap (0,E_{N,k})$ which cuts the positive real line at the centers of the ground state gap $(0, (1-1/\Delta)/2)$ and the gap \eqref{eq:dropgap}. This means $\Gamma$ can be chosen such that its length is bounded by $\ell(\Gamma) \le D_{N,min} + k +2$ and so that $\|(H_N-z)^{-1}\| \le 4/((1-1/\Delta) \min \{1,\delta\})$ for all $z\in \Gamma$. Thus 
\begin{eqnarray} \label{eq:intsplit}
\chi_{\mathcal A} Q & = & \frac{1}{2\pi i} \oint_{\Gamma} \chi_{\mathcal A} (z-H_N)^{-1}\,dz  \\
& = & \frac{1}{2\pi i} \oint_{\Gamma} \chi_{\mathcal A} (z-H_N)^{-1} \bar{P}_{N,k}\,dz + \frac{1}{2\pi i} \oint_{\Gamma} \chi_{\mathcal A} (z-H_N)^{-1} P_{N,k}\,dz \notag
\end{eqnarray}
By arguing with Schur complementation over $\ell^2({\mathcal V}_{N,k}) \oplus \ell^2(\bar{\mathcal V}_{N,k})$ (i.e., \eqref{eq:SchurInv} at complex energy) one sees that $ \chi_{\mathcal A} (z-H_N)^{-1} \bar{P}_{N,k}$ is analytic in the interior of $\Gamma$, so that the first term in \eqref{eq:intsplit} vanishes. The second term in \eqref{eq:intsplit} decomposes further into
\begin{equation}
\frac{1}{2\pi i} \oint_{\Gamma} \chi_{\mathcal A} (H_N+\lambda P_{N,k}-z)^{-1} P_{N,k}\, dz -\frac{\lambda}{2\pi i} \oint_{\Gamma} \chi_{\mathcal A} (H_N+\lambda P_{N,k} -z)^{-1} P_{N,k} (H_N-z)^{-1}\,dz.
\end{equation}
Thus we arrive at the norm bound
\begin{eqnarray} 
\|\chi_{\mathcal A} Q \| & \le & C_0\left( 1 + \frac{|\lambda|}{(1-1/\Delta) \rm{min}(1,\delta)}\right) \ell(\Gamma)  \|\chi_{\mathcal A} (H_N+\lambda P_{N,k} -z)^{-1} \chi_{{\mathcal V}_{N,k}}\| \\
& \le & \frac{C(k+2)(D_{N,min}+k+2)}{\min \{1,\delta\}} \|\chi_{\mathcal A} (H_N+\lambda P_{N,k} -z)^{-1} \chi_{{\mathcal V}_{N,k}}\|, \notag
\end{eqnarray}
where we have chosen the smallest allowed value $k(1-1/\Delta)/2$ for $\lambda$ and $C_0$ and $C$ are universal constants. We can now use \eqref{CTbound} (and its extension to complex energy in Remark~\ref{rem:complex}) to get that $\|\chi_{\mathcal A} Q\|$ decays exponentially in the distance $d_N({\mathcal A}, {\mathcal V}_{N,k})$, similar to \eqref{eq:esdecay}.

The remarks on eigenstates at the end of Section~\ref{sub:ef}, relating a many-body droplet regime at the bottom of the spectrum to quasi-one-dimensionality of the graph, also apply in the context of spectral projections.

\section*{Acknowledgement}

We are grateful to Luis Manuel Rivera for informing us about several other contexts in which symmetric product graphs (under other names) have appeared in the graph theoretic literature and for providing us with relevant references.

\begin{appendix} 

\section{Proof of the distance formula \eqref{distN}} \label{AppA}

To avoid the use of labelings, we restate \eqref{distN} as follows: Let $X, Y \subset \mathcal{V}$ with $|X|=|Y|=N<\infty$ and $S_{X,Y}=\{\pi:X \to Y \;\mbox{bijective}\}$. Then
\begin{equation} \label{distN2}
d_N(X,Y) = \min_{\pi \in S_{X,Y}} \sum_{x\in X} d(x,\pi(x)).
\end{equation}

To prove the lower bound in \eqref{distN2}, let $(X=Z_0,Z_1,\ldots,Z_{k-1},Z_k=Y)$ be a shortest path from $X$ to $Y$ in $\mathcal{G}_N$. By Remark~\ref{neighbor} this means that, for each $\ell = 0,\ldots,k-1$, $Z_{\ell+1}$ is found by moving one of the elements of $Z_\ell$ to a next neighbor. This yields the existence of $\pi' \in S_{X,Y}$ such that for each $x\in X$ there is a path of length $\ell(x)$ from $x$ to $\pi'(x)$ in $\mathcal{G}$ with $d_N(X,Y)=k=\sum_x \ell(x)$. Thus
\begin{equation} 
d_N(X,Y) \ge \sum_x d(x,\pi'(x)) \ge \min_{\pi\in S_{X,Y}} \sum_{x} d(x, \pi(x)).
\end{equation}

For the proof of the upper bound in \eqref{distN2} we first claim that 
\begin{equation} \label{abridge}
\min_{\pi \in S_{X,Y}} \sum_{x\in X} d(x,\pi(x)) = \min_{\sigma \in S_{X,Y}'} \sum_{x \in X\setminus Y} d(x,\sigma(x)),
\end{equation}
where $S_{X,Y}' = \{\sigma \in S_{X,Y}: \sigma|_{X\cap Y} = id\}$. To see this, let $\pi \in S_{X,Y}$. If $\pi \not\in S_{X,Y}'$, then there exist finitely many orbits of $\pi$ of the form $(z_0,z_1,\ldots,z_n,z_{n+1})$ with $n\ge 1$, $z_0 \in X\setminus Y$, $z_1,\ldots, z_n \in X\cap Y$ and $z_{n+1} \in Y\setminus X$, see Figure~\ref{figA1}. 

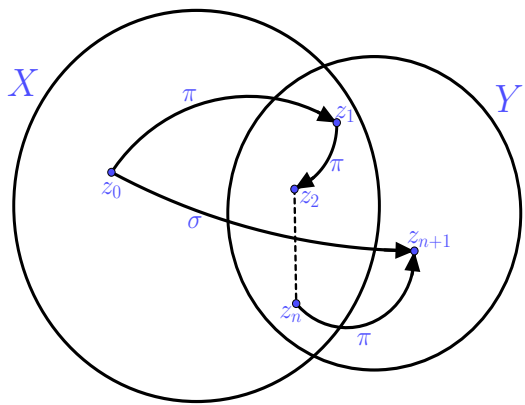
\begin{figure}[h]  
\centering
\resizebox{8cm}{6cm}
{
\begin{tikzpicture}[line cap=round,line join=round,>=triangle 45,x=1.0cm,y=1.0cm] 
\clip(9.,2.) rectangle (23.,12.);
\draw [line width=2.pt] (18.235095306300096,7.090892803147953) circle (3.4748814113937687cm);
\draw [line width=2.pt] (14.019466511511185,7.253519058059407) circle (4.336707331605724cm);
\draw [shift={(15.26366255995327,5.679828498440437)},line width=2pt]  plot[domain=1.0242448534356268:2.5235817744399727,variable=\t]({1.*4.004333790019244*cos(\t r)+0.*4.004333790019244*sin(\t r)},{0.*4.004333790019244*cos(\t r)+1.*4.004333790019244*sin(\t r)});
\draw [->,line width=2pt] (16.895824150671544,9.336430424228077) -- (17.344891715058328,9.100818750667209);
\draw [shift={(15.838054610522198,9.048173987086376)},line width=2pt]  plot[domain=-1.2300359639970875:0.034923058026760836,variable=\t]({1.*1.5077564560429877*cos(\t r)+0.*1.5077564560429877*sin(\t r)},{0.*1.5077564560429877*cos(\t r)+1.*1.5077564560429877*sin(\t r)});
\draw [dashed, ultra thick] (16.34195259471438,7.627112288121)-- (16.382888885340666,5.08906226929142);
\draw [shift={(17.581121139300624,6.16232530367498)},line width=2pt]  plot[domain=-2.4111553626479267:0.05810814191497408,variable=\t]({1.*1.608618685520024*cos(\t r)+0.*1.608618685520024*sin(\t r)},{0.*1.608618685520024*cos(\t r)+1.*1.608618685520024*sin(\t r)});
\draw [shift={(19.35836630408279,22.640583547084333)},line width=2pt]  plot[domain=4.246663043840436:4.701932039886118,variable=\t]({1.*16.38573285709911*cos(\t r)+0.*16.38573285709911*sin(\t r)},{0.*16.38573285709911*cos(\t r)+1.*16.38573285709911*sin(\t r)});
\draw [->,line width=2pt] (19.129524428300595,5.726319191311589) -- (19.187024793241083,6.255746552140502);
\draw [->,line width=2pt] (18.478285185374823,6.278502429796376) -- (19.187024793241083,6.255746552140502);
\draw [->,line width=2pt] (16.79348688949084,7.881778389081951) -- (16.34195259471438,7.627112288121);
\begin{scriptsize}
\draw [fill=ududff] (12.,8.) circle (2.5pt);
\draw[color=ududff] (12.0,7.60945301468671) node {\Large $z_0$};
\draw [fill=ududff] (17.344891715058328,9.100818750667209) circle (2.5pt);
\draw[color=ududff] (17.539339095533172,9.3) node {\Large $z_1$};
\draw [fill=ududff] (16.34195259471438,7.627112288121) circle (2.5pt);
\draw[color=ududff] (16.700145137694363,7.4) node {\Large $z_2$};
\draw [fill=ududff] (16.382888885340666,5.08906226929142) circle (2.5pt);
\draw[color=ududff] (16.256868120502372,4.845785252099427) node {\Large $z_n$};
\draw [fill=ududff] (19.187024793241083,6.255746552140502) circle (2.5pt);
\draw[color=ududff] (19.54521733622106,6.5) node {\Large $z_{n+1}$};
\draw[color=ududff] (9.914954966387876,10) node {\Huge $X$};
\draw[color=ududff] (21.441459898030622,9.643224601465466) node {\Huge $Y$};
\draw[color=ududff] (17.35507293916766,8.16098597576865) node {\Large $\pi$};
\draw[color=ududff] (13.85507293916766,9.706098597576865) node {\Large $\pi$};
\draw[color=ududff] (13.988115883703083,6.900493129504468) node {\Large $\sigma$};
\draw[color=ududff] (18,4.3) node {\Large $\pi$};
\end{scriptsize}
\end{tikzpicture}
}
\caption{Abridging $X\cap Y$}
\label{figA1}
\end{figure}

For each of these orbits redefine $\pi$ on $z_0, \ldots, z_n$ as $\sigma(z_0) = z_{n+1}$, $\sigma(z_j)=z_j$, $j=1,\ldots,n$. This yields $\sigma \in S_{X,Y}'$ such that, by the triangle inequality in $\mathcal{G}$,
\begin{equation} \label{abridgege}
 \sum_{x\in X} d(x,\pi(x)) \ge \sum_{x\in X\setminus Y} d(x,\sigma(x)).
 \end{equation}
This proves the lower bound in \eqref{abridge}, while the upper bound is trivial.

Now we claim that for every $\sigma \in S_{X,Y}'$ there exists a path from $X$ to $Y$ in $\mathcal{G}_N$ of length $\sum_{x\in X\setminus Y} d(x,\sigma(x))$, which together with \eqref{abridge} completes the proof of the upper bound in \eqref{distN2}. 

Let $X\setminus Y = \{x_1,\ldots,x_{\ell}\}$ and $Y\setminus X = \{y_1=\sigma(x_1), \ldots, y_{\ell} = \sigma(x_{\ell})\}$. Let $p=(x_1=z_0,z_1,\ldots,z_n=y_1)$ be a shortest path from $x_1$ to $y_1$ in $\mathcal{G}$, and $\{z_{j_1}, \ldots, z_{j_r}\} = \{z_1,\ldots,z_{n-1}\} \cap X$, see Figure~\ref{figA2}. 

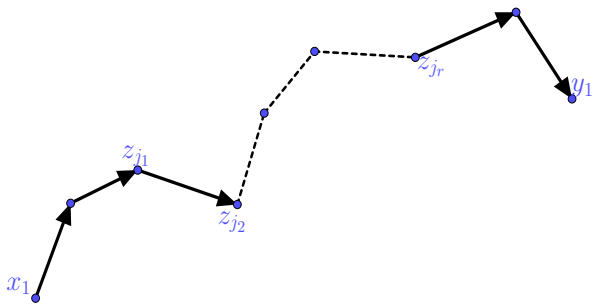
\begin{figure}[h] 
\centering
\resizebox{9cm}{5cm}
{
\begin{tikzpicture}[line cap=round,line join=round,>=triangle 45,x=1.0cm,y=1.0cm]
\clip(1.,0.) rectangle (16.,8.);
\draw [->,line width=2.pt] (2.4,1.2) -- (3.189634189152375,3.2205118799727783);
\draw [->,line width=2.pt] (3.189634189152375,3.220511879972778) -- (4.704866618562552,3.92762034703086);
\draw [->,line width=2.pt] (4.704866618562552,3.92762034703086) -- (6.952461388854316,3.195258006149275);
\draw [->,line width=2.pt] (10.967827326791287,6.32673836026364) -- (13.240675970906553,7.286385565556752);
\draw [->,line width=2.pt] (13.240675970906553,7.286385565556752) -- (14.5033696620817,5.442852776441037);
\draw [dashed, ultra thick] (6.952461388854316,3.195258006149275)-- (7.558554360618387,5.139806290559002);
\draw [dashed, ultra thick] (7.558554360618387,5.139806290559002)-- (8.69497868267602,6.453007729381155);
\draw [dashed, ultra thick] (8.69497868267602,6.453007729381155)-- (10.967827326791287,6.32673836026364);
\begin{scriptsize}
\draw [fill=ududff] (2.4,1.2) circle (2.5pt);
\draw[color=ududff] (2.023453018711582,1.4537562076215832) node {\Large $x_1$};
\draw [fill=ududff] (3.189634189152375,3.220511879972778) circle (2.5pt);
\draw [fill=ududff] (4.704866618562552,3.92762034703086) circle (2.5pt);
\draw[color=ududff] (4.66800130312131,4.235744309382954) node {\Large $z_{j_1}$};
\draw [fill=ududff] (6.952461388854316,3.195258006149275) circle (2.5pt);
\draw[color=ududff] (6.837803461354541,2.82762034703086) node {\Large $z_{j_2}$};
\draw [fill=ududff] (7.558554360618387,5.139806290559002) circle (2.5pt);
\draw [fill=ududff] (8.69497868267602,6.453007729381155) circle (2.5pt);
\draw [fill=ududff] (10.967827326791287,6.32673836026364) circle (2.5pt);
\draw[color=ududff] (11.334008497232078,6.124707369675616) node {\Large $z_{j_r}$};
\draw [fill=ududff] (13.240675970906553,7.286385565556752) circle (2.5pt);
\draw [fill=ududff] (14.5033696620817,5.442852776441037) circle (2.5pt);
\draw[color=ududff] (14.74328146340498,5.673184126734599) node {\Large $y_1$};
\end{scriptsize}
\end{tikzpicture}
}
\caption{Obstacles $z_{j_1}, \ldots, z_{j_r}$ in a shortest path}
\label{figA2}
\end{figure}

Then a path in $\mathcal{G}_N$ from $X$ to $(X\setminus \{x_1\}) \cup \{y_1\}$ of length $n=d(x_1,\sigma(x_1))$ is found by successively moving $z_{j_r}$ to $y_1$, $z_{j_{r-1}}$ to $z_{j_r}$, \ldots, and $x_1$ to $z_{j_1}$ along $p$ (resulting in a ``caterpillar'' move, to use a term from \cite{Johns}). 

In this way each $x_j$ can be moved to $y_j$, for $j=1,\ldots,\ell$, giving a path of the desired length in $\mathcal{G}_N$.

\section{Solutions to certain edge-isoperimetric problems} \label{sec:AppB}

Given a graph $\mathcal{G}=(\mathcal{V,E})$ and its $N$-th symmetric product $\mathcal{G}_N=(\mathcal{V}_N,\mathcal{E}_N)$, recall that finding the lowest eigenvalue of the operator $D_N$ is equivalent to solving the edge-isoperimetric problem
\begin{equation} \label{eq:minimum} 
\min\{|\partial X|:X\subset\mathcal{V}, |X|=N\}=:D_{N,min}\:.
\end{equation}
In addition to this, we will also determine the set of all minimizers $\mathcal{V}_{N,1}$ of \eqref{eq:minimum}.

\subsection{Classical droplets on the two-dimensional lattice}

Let $\mathcal{G}=(\mathcal{V},\mathcal{E})$ be the two-dimensional lattice, i.e.\ the vertex set is given by $\mathcal{V}=\Z^2$ and the edge set is given by $\ell^1$-next-neighbors: $\mathcal{E}=\{\{x,y\}: x,y\in\Z^2, \|x-y\|_1=1\}$. 

For our arguments, it will be convenient to identify vertices $x\in\Z^2$ with the unit square in $\R^2$ with $x$ as center point (cf. Figures \ref{fig:Z2disconnected} and \ref{fig:z2connected}). To make this more precise, let 
\begin{equation} 
\mathcal{U}_x=[x_1-1/2,x_1+1/2]\times[x_2-1/2,x_2+1/2]\subset \R^2
\end{equation} 
denote the unit square with center $x$ and for any $X\subset\Z^2$, we define
\begin{equation} 
\mathcal{U}(X):=\bigcup_{x\in X}\mathcal{U}_x\:.
\end{equation}
For any such $\mathcal{U}(X)$ let $\mbox{Area}(X)$ denote its area and $\mbox{Per}(X)$ its perimeter. Note that $\mbox{Area}(X)=|X|$ and $\mbox{Per}(X)=S(X)=|\partial X|$. 

If $N$ is a square (i.e.\ if $\sqrt{N}\in\N$), let us now prove that the set of all classical droplet configurations is given by the quadratic ones. For any $x=(x_1,x_2)\in\Z^2$, we thus define 
\begin{equation}
Q_{x}:=\{x_1,x_1+1,\dots,x_1+\sqrt{N}-1\}\times\{x_2,x_2+1,\dots,x_2+\sqrt{N}-1\}\:.
\end{equation}
\begin{lemma} Let $\mathcal{G}$ be the two-dimensional lattice. Then, the set of all classical droplet configurations $\mathcal{V}_{N,1}$ is given by
\begin{equation}
\mathcal{V}_{N,1}=\{Q_x: x\in\Z^2\}\:,
\end{equation}
which means in particular that $D_{N,min}=4\sqrt{N}$. 
\label{lemma:square}
\end{lemma}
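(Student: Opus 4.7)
The plan is to obtain the lower bound $|\partial X|\ge 4\sqrt{N}$ by slicing $X$ into rows and columns, bounding the horizontal and vertical contributions to the edge boundary separately, and then applying the AM--GM inequality. For finite $X\subset\Z^2$, let $R=R(X)$ denote the number of nonempty rows and $C=C(X)$ the number of nonempty columns. I will write $E_h(X)$ and $E_v(X)$ for the number of horizontal and vertical edges in $\partial X$, so $|\partial X|=E_h(X)+E_v(X)$.

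The first key step is to observe that every nonempty row $\{i:(i,j)\in X\}$ is a nonempty finite subset of $\Z$, hence has a leftmost and rightmost element, and the two horizontal edges immediately to the left of the leftmost and to the right of the rightmost element of the row both lie in $\partial X$. This yields $E_h(X)\ge 2R$, and symmetrically $E_v(X)\ge 2C$. The second step uses that every element of $X$ lies in one of the $R$ nonempty rows and one of the $C$ nonempty columns, so $|X|\le RC$, whence
\begin{equation}
|\partial X|\;\ge\; 2R+2C\;\ge\; 4\sqrt{RC}\;\ge\;4\sqrt{|X|}\;=\;4\sqrt{N}
\end{equation}
by AM--GM. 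Since the explicit configuration $Q_x$ satisfies $|\partial Q_x|=4\sqrt{N}$, this proves $D_{N,\min}=4\sqrt{N}$.

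For the characterization of minimizers, I will trace back the equality conditions in the inequality chain. Equality $|\partial X|=4\sqrt{N}$ forces equality at each step: AM--GM equality gives $R=C$; combined with $RC\ge |X|=N=m^{2}$ and $R+C=2\sqrt{N}$, this forces $R=C=m$ and $|X|=RC$; the latter in turn forces $X=B\times A$ where $A$ (resp.\ $B$) is the set of nonempty row (resp.\ column) indices. Finally, $E_h(X)=2R$ together with the decomposition of $E_h$ as a sum over rows of $2\cdot(\text{number of maximal intervals in that row})$ forces each nonempty row, hence $B$ itself, to be a single interval of length $m$; analogously $A$ is an interval of length $m$. Thus $X=Q_x$ for some $x\in\Z^2$, proving $\mathcal{V}_{N,1}=\{Q_x:x\in\Z^2\}$.

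The main obstacle will be making the equality analysis airtight, in particular the passage from $|X|=RC$ to the full Cartesian product structure $X=B\times A$, and from $E_h(X)=2R$ to the row-convexity of $X$. Both follow once one observes that the inequalities $E_h(X)\ge 2R$ and $|X|\le RC$ are sums of nonnegative contributions indexed by rows (resp.\ pairs row$\times$column), so equality in the total forces equality in each summand; I expect no further subtleties beyond carefully writing this out.
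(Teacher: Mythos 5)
Your proof is correct, and it takes a genuinely different route from the paper. The paper argues geometrically: it first reduces to connected $\widehat{X}$ by showing that disconnected configurations can be translated together to strictly decrease the surface; it then compares $\widehat{X}$ with its minimal bounding rectangle $R(\widehat{X})$, using $S(\widehat{X})\ge S(R(\widehat{X}))$ (valid for connected sets) together with $|\widehat{X}|\le|R(\widehat{X})|$ and an elementary optimization over rectangles of prescribed area. Your slicing argument is more combinatorial and bypasses both the connectedness reduction and the bounding-box comparison entirely: the inequality
\begin{equation*}
|\partial X| = E_h(X)+E_v(X) \ge 2R+2C \ge 4\sqrt{RC} \ge 4\sqrt{|X|}
\end{equation*}
holds for arbitrary finite $X\subset\Z^2$, and the equality analysis — tracing AM--GM equality to $R=C=\sqrt N$, then $|X|=RC$ to the product structure $X=B\times A$, then $E_h(X)=2R$ (via the per-row count of maximal intervals) to $B$ being an interval, and symmetrically for $A$ — delivers the characterization of minimizers as a byproduct rather than via a separate case analysis. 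Your approach is cleaner and somewhat more elementary, at the cost of requiring a careful accounting of which inequality is tight where; the paper's geometric argument is more visual and generalizes naturally to the ``one-sided'' variants used in Lemma~\ref{lemma:strip} for the strip, where the floor/ceiling attachment matters.

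One small remark: in the equality analysis it is worth noting explicitly that $E_h+E_v=2R+2C$ forces \emph{both} $E_h=2R$ and $E_v=2C$ (not merely their sum), since each is separately bounded below; you use this implicitly when invoking the per-row decomposition to conclude row-convexity. Similarly, the passage from $|X|=RC$ to $X=B\times A$ uses that $X$ is a subset of the product of its nonempty row and column index sets, and that a subset of a finite set of size $RC$ having $RC$ elements must be the whole set. Both points are exactly where you flagged the ``sum of nonnegative contributions'' observation, so no gap — just make sure those sentences actually appear when you write it out.
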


We include the proof of this result for convenience. While we restrict ourselves to squares, the solution to the isoperimetric problem on $\Z^2$ is known for arbitrary $N$, where the minimizers grow in a spiraling pattern with $N$, see Section 2.1 of \cite{Bezrukov} and further references provided there.

\begin{proof}
For a proof by contradiction, assume that $\widehat{X}$ is a non-quadratic classical droplet configuration, which means that $\widehat{X}\in\mathcal{V}_{N,1}$ but there is no $x\in\Z^2$ such that $\widehat{X}=Q_x$. 
Firstly, note that any $\widehat{X}\in\mathcal{V}_{N,1}$ has to be connected, i.e.\ for any decomposition $\widehat{X}=\widehat{X}_1\cup\widehat{X}_2$, there is always at least one $x_1\in\widehat{X}_1$ and one $x_2\in\widehat{X}_2$ such that $\{x_1,x_2\}\in\mathcal{E}$. For if this is not the case, one could always find a new set $\widehat{X}'=\widehat{X}_1\cup\widehat{X}_2'$ with smaller edge-surface, where $\widehat{X}_2'$ is a suitable translate of $\widehat{X}_2$ which is adjoined to $\widehat{X}_1$ (see Figures~\ref{fig:Z2disconnected} and \ref{fig:z2connected}). 

{\definecolor{ffqqqq}{rgb}{1.,0.,0.}
\definecolor{rechteck}{rgb}{0.,0.,1.}
\definecolor{ududff}{rgb}{0.30196078431372547,0.30196078431372547,1.}

\begin{figure}[h]
\centering
\resizebox{9cm}{5cm}
{
\begin{tikzpicture}[line cap=round,line join=round,>=triangle 45,x=1.0cm,y=1.0cm, scale=1]
\clip(2.,-8.) rectangle (20.,4.);
\fill[line width=2.pt,color=ffqqqq,fill=ffqqqq,fill opacity=0.10000000149011612] (4.,-2.) -- (4.,-4.) -- (6.,-4.) -- (6.,-6.) -- (10.,-6.) -- (10.,-2.) -- (8.,-2.) -- (8.,0.) -- (6.,0.) -- (6.,2.) -- (4.,2.) -- cycle;
\fill[line width=2.pt,color=ffqqqq,fill=ffqqqq,fill opacity=0.10000000149011612] (10.,2.) -- (10.,0.) -- (12.,0.) -- (12.,-2.) -- (12.,-4.) -- (12.,-6.) -- (14.,-6.) -- (14.,0.) -- (16.,0.) -- (16.,-2.) -- (18.,-2.) -- (18.,2.) -- cycle;
\draw [line width=1.pt,domain=2.:20.] plot(\x,{(-140.-0.*\x)/20.});
\draw [line width=1.pt,domain=2.:20.] plot(\x,{(--100.-0.*\x)/-20.});
\draw [line width=1.pt,domain=2.:20.] plot(\x,{(-60.-0.*\x)/20.});
\draw [line width=1.pt,domain=2.:20.] plot(\x,{(--20.-0.*\x)/-20.});
\draw [line width=1.pt,domain=2.:20.] plot(\x,{(--20.-0.*\x)/20.});
\draw [line width=1.pt,domain=2.:20.] plot(\x,{(--60.-0.*\x)/20.});
\draw [line width=1.pt,domain=2.:20.] plot(\x,{(-100.-0.*\x)/-20.});
\draw [line width=1.pt] (3.,-8.) -- (3.,4.);
\draw [line width=1.pt] (5.010346226420869,-8.) -- (5.010346226420869,4.);
\draw [line width=1.pt] (7.,-8.) -- (7.,4.);
\draw [line width=1.pt] (9.,-8.) -- (9.,4.);
\draw [line width=1.pt] (11.,-8.) -- (11.,4.);
\draw [line width=1.pt] (13.,-8.) -- (13.,4.);
\draw [line width=1.pt] (15.,-8.) -- (15.,4.);
\draw [line width=1.pt] (17.,-8.) -- (17.,4.);
\draw [line width=1.pt] (19.,-8.) -- (19.,4.);
\draw [line width=2.pt,color=ffqqqq] (4.,-2.)-- (4.,-4.);
\draw [line width=2.pt,color=ffqqqq] (4.,-4.)-- (6.,-4.);
\draw [line width=2.pt,color=ffqqqq] (6.,-4.)-- (6.,-6.);
\draw [line width=2.pt,color=ffqqqq] (6.,-6.)-- (10.,-6.);
\draw [line width=2.pt,color=ffqqqq] (10.,-6.)-- (10.,-2.);
\draw [line width=2.pt,color=ffqqqq] (10.,-2.)-- (8.,-2.);
\draw [line width=2.pt,color=ffqqqq] (8.,-2.)-- (8.,0.);
\draw [line width=2.pt,color=ffqqqq] (8.,0.)-- (6.,0.);
\draw [line width=2.pt,color=ffqqqq] (6.,0.)-- (6.,2.);
\draw [line width=2.pt,color=ffqqqq] (6.,2.)-- (4.,2.);
\draw [line width=2.pt,color=ffqqqq] (4.,2.)-- (4.,-2.);
\draw [line width=1.pt,dotted,color=ffqqqq] (6.,0.)-- (4.,0.);
\draw [line width=1.pt,dotted,color=ffqqqq] (4.,-2.)-- (8.,-2.);
\draw [line width=1.pt,dotted,color=ffqqqq] (6.,0.)-- (6.,-4.);
\draw [line width=1.pt,dotted,color=ffqqqq] (8.,-2.)-- (8.,-6.);
\draw [line width=1.pt,dotted,color=ffqqqq] (6.,-4.)-- (10.,-4.);
\draw (2.884363351309636,3.08967050360574424) node[anchor=north west] {\LARGE $\widehat{X}_1$};
\draw (17.72160482015922,3.08967050360574424) node[anchor=north west] {\LARGE $\widehat{X}_2$};
\draw [line width=2.pt,color=ffqqqq] (10.,2.)-- (10.,0.);
\draw [line width=2.pt,color=ffqqqq] (10.,0.)-- (12.,0.);
\draw [line width=2.pt,color=ffqqqq] (12.,0.)-- (12.,-2.);
\draw [line width=2.pt,color=ffqqqq] (12.,-2.)-- (12.,-4.);
\draw [line width=2.pt,color=ffqqqq] (12.,-4.)-- (12.,-6.);
\draw [line width=2.pt,color=ffqqqq] (12.,-6.)-- (14.,-6.);
\draw [line width=2.pt,color=ffqqqq] (14.,-6.)-- (14.,0.);
\draw [line width=2.pt,color=ffqqqq] (14.,0.)-- (16.,0.);
\draw [line width=2.pt,color=ffqqqq] (16.,0.)-- (16.,-2.);
\draw [line width=2.pt,color=ffqqqq] (16.,-2.)-- (18.,-2.);
\draw [line width=2.pt,color=ffqqqq] (18.,-2.)-- (18.,2.);
\draw [line width=2.pt,color=ffqqqq] (18.,2.)-- (10.,2.);
\draw [line width=1.pt,dotted,color=ffqqqq] (12.,2.)-- (12.,0.);
\draw [line width=1.pt,dotted,color=ffqqqq] (14.,0.)-- (14.,2.);
\draw [line width=1.pt,dotted,color=ffqqqq] (16.,0.)-- (16.,2.);
\draw [line width=1.pt,dotted,color=ffqqqq] (16.,0.)-- (18.,0.);
\draw [line width=1.pt,dotted,color=ffqqqq] (14.,0.)-- (12.,0.);
\draw [line width=1.pt,dotted,color=ffqqqq] (12.,-2.)-- (14.,-2.);
\draw [line width=1.pt,dotted,color=ffqqqq] (14.,-4.)-- (12.,-4.);
\begin{scriptsize}
\draw[color=ffqqqq] (5,1) node {\Large$\bullet$};
\draw[color=ffqqqq] (5,-1) node {\Large$\bullet$};
\draw[color=ffqqqq] (7,-1) node {\Large$\bullet$};
\draw[color=ffqqqq] (5,-3) node {\Large$\bullet$};
\draw[color=ffqqqq] (7,-3) node {\Large$\bullet$};
\draw[color=ffqqqq] (9,-3) node {\Large$\bullet$};
\draw[color=ffqqqq] (7,-5) node {\Large$\bullet$};
\draw[color=ffqqqq] (9,-5) node {\Large$\bullet$};
\draw[color=ffqqqq] (11,1) node {\Large$\bullet$};
\draw[color=ffqqqq] (13,1) node {\Large$\bullet$};
\draw[color=ffqqqq] (15,1) node {\Large$\bullet$};
\draw[color=ffqqqq] (17,1) node {\Large$\bullet$};
\draw[color=ffqqqq] (13,-1) node {\Large$\bullet$};
\draw[color=ffqqqq] (13,-3) node {\Large$\bullet$};
\draw[color=ffqqqq] (13,-5) node {\Large$\bullet$};
\draw[color=ffqqqq] (17,-1) node {\Large$\bullet$};
\draw [fill=ududff] (1.,3.) circle (2.5pt);
\draw[color=ududff] (1.2227464215778143,3.596279928461965) node {$A$};
\draw [fill=ududff] (1.,1.) circle (2.5pt);
\draw[color=ududff] (1.2227464215778143,1.5981720374790913) node {$B$};
\draw [fill=ududff] (1.,-1.) circle (2.5pt);
\draw[color=ududff] (1.2227464215778143,-0.39993585350378247) node {$C$};
\draw [fill=ududff] (1.,-3.) circle (2.5pt);
\draw[color=ududff] (1.2227464215778143,-2.398043744486656) node {$D$};
\draw [fill=ududff] (1.,-5.) circle (2.5pt);
\draw[color=ududff] (1.2227464215778143,-4.39615163546953) node {$E$};
\draw [fill=ududff] (1.,-7.) circle (2.5pt);
\draw[color=ududff] (1.2227464215778143,-6.394259526452404) node {$F$};
\draw [fill=ududff] (1.,5.) circle (2.5pt);
\draw[color=ududff] (1.2227464215778143,5.594387819444838) node {$G$};
\draw [fill=ududff] (21.,5.) circle (2.5pt);
\draw[color=ududff] (21.236052878035306,5.594387819444838) node {$H$};
\draw [fill=ududff] (21.,3.) circle (2.5pt);
\draw[color=ududff] (21.236052878035306,3.596279928461965) node {$I$};
\draw [fill=ududff] (21.,1.) circle (2.5pt);
\draw[color=ududff] (21.236052878035306,1.5981720374790913) node {$J$};
\draw [fill=ududff] (21.,-1.) circle (2.5pt);
\draw[color=ududff] (21.236052878035306,-0.39993585350378247) node {$K$};
\draw [fill=ududff] (21.,-3.) circle (2.5pt);
\draw[color=ududff] (21.236052878035306,-2.398043744486656) node {$L$};
\draw [fill=ududff] (21.,-5.) circle (2.5pt);
\draw[color=ududff] (21.236052878035306,-4.39615163546953) node {$M$};
\draw [fill=ududff] (21.,-7.) circle (2.5pt);
\draw[color=ududff] (21.236052878035306,-6.394259526452404) node {$N$};
\draw [fill=ududff] (3.,7.) circle (2.5pt);
\draw[color=ududff] (3.220854312560687,7.592495710427713) node {$O$};
\draw [fill=ududff] (5.,7.) circle (2.5pt);
\draw[color=ududff] (5.218962203543561,7.592495710427713) node {$P$};
\draw [fill=ududff] (7.,7.) circle (2.5pt);
\draw[color=ududff] (7.217070094526434,7.592495710427713) node {$Q$};
\draw [fill=ududff] (9.,7.) circle (2.5pt);
\draw[color=ududff] (9.215177985509309,7.592495710427713) node {$R$};
\draw [fill=ududff] (11.,7.) circle (2.5pt);
\draw[color=ududff] (11.213285876492181,7.592495710427713) node {$S$};
\draw [fill=ududff] (13.,7.) circle (2.5pt);
\draw[color=ududff] (13.211393767475055,7.592495710427713) node {$T$};
\draw [fill=ududff] (15.,7.) circle (2.5pt);
\draw[color=ududff] (15.20950165845793,7.592495710427713) node {$U$};
\draw [fill=ududff] (17.,7.) circle (2.5pt);
\draw[color=ududff] (17.239837096069557,7.592495710427713) node {$V$};
\draw [fill=ududff] (19.,7.) circle (2.5pt);
\draw[color=ududff] (19.23794498705243,7.592495710427713) node {$W$};
\draw [fill=ududff] (3.,-9.) circle (2.5pt);
\draw[color=ududff] (3.220854312560687,-8.392367417435278) node {$Z$};
\draw [fill=ududff] (5.023766397005074,-9.079754324721877) circle (2.5pt);
\draw[color=ududff] (5.331758616744207,-8.408481190749656) node {$A_1$};
\draw [fill=ududff] (7.,-9.) circle (2.5pt);
\draw[color=ududff] (7.2976389610983246,-8.311798550863388) node {$B_1$};
\draw [fill=ududff] (9.,-9.) circle (2.5pt);
\draw[color=ududff] (9.295746852081198,-8.311798550863388) node {$C_1$};
\draw [fill=ududff] (11.,-9.) circle (2.5pt);
\draw[color=ududff] (11.29385474306407,-8.311798550863388) node {$D_1$};
\draw [fill=ududff] (13.,-9.) circle (2.5pt);
\draw[color=ududff] (13.291962634046946,-8.311798550863388) node {$E_1$};
\draw [fill=ududff] (15.,-9.) circle (2.5pt);
\draw[color=ududff] (15.290070525029819,-8.311798550863388) node {$F_1$};
\draw [fill=ududff] (17.,-9.) circle (2.5pt);
\draw[color=ududff] (17.32040596264145,-8.311798550863388) node {$G_1$};
\draw [fill=ududff] (19.,-9.) circle (2.5pt);
\draw[color=ududff] (19.31851385362432,-8.311798550863388) node {$H_1$};
\draw[color=black] (-4.674894611484538,-6.458714619709916) node {$f$};
\draw[color=black] (-4.674894611484538,-4.460606728727043) node {$g$};
\draw[color=black] (-4.674894611484538,-2.4624988377441683) node {$h$};
\draw[color=black] (-4.674894611484538,-0.46439094676129444) node {$i$};
\draw[color=black] (-4.674894611484538,1.5337169442215792) node {$j$};
\draw[color=black] (-4.674894611484538,2.7583637161143084) node {$k$};
\draw[color=black] (-4.674894611484538,4.788699153725938) node {$l$};
\draw[color=black] (3.3497644990757114,12.97449599742997) node {$m$};
\draw[color=black] (5.315644843429829,12.97449599742997) node {$n$};
\draw[color=black] (7.3459802810414585,12.97449599742997) node {$p$};
\draw[color=black] (9.344088172024332,12.97449599742997) node {$q$};
\draw[color=black] (11.342196063007206,12.97449599742997) node {$r$};
\draw[color=black] (13.340303953990079,12.97449599742997) node {$s$};
\draw[color=black] (15.338411844972953,12.97449599742997) node {$t$};
\draw[color=black] (17.336519735955825,12.97449599742997) node {$a$};
\draw[color=black] (18.561166507848554,12.97449599742997) node {$b$};
\end{scriptsize}
\end{tikzpicture}
}
\caption{An example of a set $\widehat{X}=\widehat{X}_1\cup\widehat{X}_2$ which is not connected.}
\label{fig:Z2disconnected}
\end{figure}
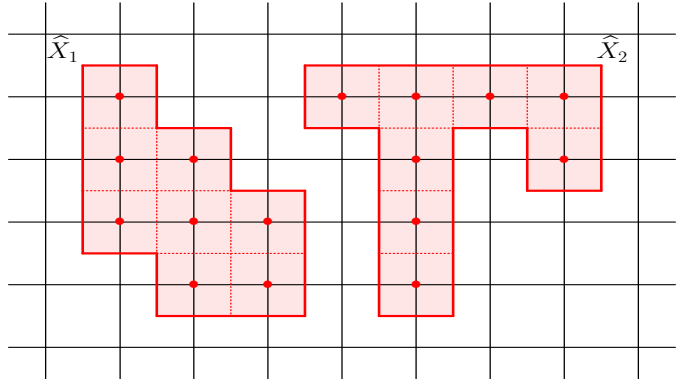

\begin{figure}[h]
\centering
\resizebox{9cm}{5cm}
{
\begin{tikzpicture}[line cap=round,line join=round,>=triangle 45,x=1.0cm,y=1.0cm, scale=1]
\clip(2.,-8.) rectangle (20.,4.);
\fill[line width=2.pt,color=ffqqqq,fill=ffqqqq,fill opacity=0.10000000149011612] (4.,-2.) -- (4.,-4.) -- (6.,-4.) -- (6.,-6.) -- (10.,-6.) -- (10.,-2.) -- (8.,-2.) -- (8.,0.) -- (6.,0.) -- (6.,2.) -- (4.,2.) -- cycle;
\fill[line width=2.pt,color=ffqqqq,fill=ffqqqq,fill opacity=0.10000000149011612] (10.,-6.) -- (12.,-6.) -- (12.,0.) -- (14.,0.) -- (14.,-2.) -- (16.,-2.) -- (16.,2.) -- (10.,2.) -- (8.,2.) -- (8.,0.) -- (10.,0.) -- (10.,-2.) -- cycle;
\draw [line width=1.pt,domain=2.:20.] plot(\x,{(-140.-0.*\x)/20.});
\draw [line width=1.pt,domain=2.:20.] plot(\x,{(--100.-0.*\x)/-20.});
\draw [line width=1.pt,domain=2.:20.] plot(\x,{(-60.-0.*\x)/20.});
\draw [line width=1.pt,domain=2.:20.] plot(\x,{(--20.-0.*\x)/-20.});
\draw [line width=1.pt,domain=2.:20.] plot(\x,{(--20.-0.*\x)/20.});
\draw [line width=1.pt,domain=2.:20.] plot(\x,{(--60.-0.*\x)/20.});
\draw [line width=1.pt,domain=2.:20.] plot(\x,{(-100.-0.*\x)/-20.});
\draw [line width=1.pt] (3.,-8.) -- (3.,4.);
\draw [line width=1.pt] (5.010346226420869,-8.) -- (5.010346226420869,4.);
\draw [line width=1.pt] (7.,-8.) -- (7.,4.);
\draw [line width=1.pt] (9.,-8.) -- (9.,4.);
\draw [line width=1.pt] (11.,-8.) -- (11.,4.);
\draw [line width=1.pt] (13.,-8.) -- (13.,4.);
\draw [line width=1.pt] (15.,-8.) -- (15.,4.);
\draw [line width=1.pt] (17.,-8.) -- (17.,4.);
\draw [line width=1.pt] (19.,-8.) -- (19.,4.);
\draw [line width=2.pt,color=ffqqqq] (4.,-2.)-- (4.,-4.);
\draw [line width=2.pt,color=ffqqqq] (4.,-4.)-- (6.,-4.);
\draw [line width=2.pt,color=ffqqqq] (6.,-4.)-- (6.,-6.);
\draw [line width=2.pt,color=ffqqqq] (6.,-6.)-- (10.,-6.);
\draw [line width=1.pt,dotted,color=ffqqqq] (10.,-6.)-- (10.,-2.);
\draw [line width=2.pt,color=rechteck] (16.1,2.1)-- (3.9,2.1);
\draw [line width=2.pt,color=rechteck] (16.1,2.1)-- (16.1,-6.1);
\draw [line width=2.pt,color=rechteck] (3.9,2.1)-- (3.9,-6.1);
\draw [line width=2.pt,color=rechteck] (3.9,-6.1)-- (16.1,-6.1);
\draw [line width=2.pt,color=ffqqqq] (10.,-2.)-- (8.,-2.);
\draw [line width=2.pt,color=ffqqqq] (8.,-2.)-- (8.,0.);
\draw [line width=2.pt,color=ffqqqq] (8.,0.)-- (6.,0.);
\draw [line width=2.pt,color=ffqqqq] (6.,0.)-- (6.,2.);
\draw [line width=2.pt,color=ffqqqq] (6.,2.)-- (4.,2.);
\draw [line width=2.pt,color=ffqqqq] (4.,2.)-- (4.,-2.);
\draw [line width=1.pt,dotted,color=ffqqqq] (6.,0.)-- (4.,0.);
\draw [line width=1.pt,dotted,color=ffqqqq] (4.,-2.)-- (8.,-2.);
\draw [line width=1.pt,dotted,color=ffqqqq] (6.,0.)-- (6.,-4.);
\draw [line width=1.pt,dotted,color=ffqqqq] (8.,-2.)-- (8.,-6.);
\draw [line width=1.pt,dotted,color=ffqqqq] (6.,-4.)-- (10.,-4.);
\draw (2.884363351309636,3.08967050360574424) node[anchor=north west] {\LARGE $\widehat{X}_1$};
\draw (15.72160482015922,3.08967050360574424) node[anchor=north west] {\LARGE $\widehat{X}_2'$};
\draw [line width=2.pt,color=ffqqqq] (10.,-6.)-- (12.,-6.);
\draw [line width=2.pt,color=ffqqqq] (12.,-6.)-- (12.,0.);
\draw [line width=2.pt,color=ffqqqq] (12.,0.)-- (14.,0.);
\draw [line width=2.pt,color=ffqqqq] (14.,0.)-- (14.,-2.);
\draw [line width=2.pt,color=ffqqqq] (14.,-2.)-- (16.,-2.);
\draw [line width=2.pt,color=ffqqqq] (16.,-2.)-- (16.,2.);
\draw [line width=2.pt,color=ffqqqq] (16.,2.)-- (10.,2.);
\draw [line width=2.pt,color=ffqqqq] (10.,2.)-- (8.,2.);
\draw [line width=2.pt,color=ffqqqq] (8.,2.)-- (8.,0.);
\draw [line width=2.pt,color=ffqqqq] (8.,0.)-- (10.,0.);
\draw [line width=2.pt,color=ffqqqq] (10.,0.)-- (10.,-2.);
\draw [line width=1.pt,dotted,color=ffqqqq] (10.,0.)-- (10.,2.);
\draw [line width=1.pt,dotted,color=ffqqqq] (12.,0.)-- (12.,2.);
\draw [line width=1.pt,dotted,color=ffqqqq] (14.,0.)-- (14.,2.);
\draw [line width=1.pt,dotted,color=ffqqqq] (14.,0.)-- (16.,0.);
\draw [line width=1.pt,dotted,color=ffqqqq] (12.,0.)-- (10.,0.);
\draw [line width=1.pt,dotted,color=ffqqqq] (10.,-2.)-- (12.,-2.);
\draw [line width=1.pt,dotted,color=ffqqqq] (10.,-4.)-- (12.,-4.);
\begin{scriptsize}
\draw[color=ffqqqq] (5,1) node {\Large$\bullet$};
\draw[color=ffqqqq] (5,-1) node {\Large$\bullet$};
\draw[color=ffqqqq] (7,-1) node {\Large$\bullet$};
\draw[color=ffqqqq] (5,-3) node {\Large$\bullet$};
\draw[color=ffqqqq] (7,-3) node {\Large$\bullet$};
\draw[color=ffqqqq] (9,-3) node {\Large$\bullet$};
\draw[color=ffqqqq] (7,-5) node {\Large$\bullet$};
\draw[color=ffqqqq] (9,-5) node {\Large$\bullet$};
\draw[color=ffqqqq] (9,1) node {\Large$\bullet$};
\draw[color=ffqqqq] (11,1) node {\Large$\bullet$};
\draw[color=ffqqqq] (13,1) node {\Large$\bullet$};
\draw[color=ffqqqq] (15,1) node {\Large$\bullet$};
\draw[color=ffqqqq] (11,-1) node {\Large$\bullet$};
\draw[color=ffqqqq] (11,-3) node {\Large$\bullet$};
\draw[color=ffqqqq] (11,-5) node {\Large$\bullet$};
\draw[color=ffqqqq] (15,-1) node {\Large$\bullet$};
\draw [fill=ududff] (1.,3.) circle (2.5pt);
\draw[color=ududff] (1.171971740416364,3.4673321863738567) node {$A$};
\draw [fill=ududff] (1.,1.) circle (2.5pt);
\draw[color=ududff] (1.171971740416364,1.4577263081249454) node {$B$};
\draw [fill=ududff] (1.,-1.) circle (2.5pt);
\draw[color=ududff] (1.171971740416364,-0.5518795701239658) node {$C$};
\draw [fill=ududff] (1.,-3.) circle (2.5pt);
\draw[color=ududff] (1.171971740416364,-2.5366754992586933) node {$D$};
\draw [fill=ududff] (1.,-5.) circle (2.5pt);
\draw[color=ududff] (1.171971740416364,-4.546281377507604) node {$E$};
\draw [fill=ududff] (1.,-7.) circle (2.5pt);
\draw[color=ududff] (1.171971740416364,-6.531077306642333) node {$F$};
\draw [fill=ududff] (1.,5.) circle (2.5pt);
\draw[color=ududff] (1.171971740416364,4.9063092349965345) node {$G$};
\draw [fill=ududff] (21.,5.) circle (2.5pt);
\draw[color=ududff] (21.168790726448727,4.9063092349965345) node {$H$};
\draw [fill=ududff] (21.,3.) circle (2.5pt);
\draw[color=ududff] (21.168790726448727,3.4673321863738567) node {$I$};
\draw [fill=ududff] (21.,1.) circle (2.5pt);
\draw[color=ududff] (21.168790726448727,1.4577263081249454) node {$J$};
\draw [fill=ududff] (21.,-1.) circle (2.5pt);
\draw[color=ududff] (21.168790726448727,-0.5518795701239658) node {$K$};
\draw [fill=ududff] (21.,-3.) circle (2.5pt);
\draw[color=ududff] (21.168790726448727,-2.5366754992586933) node {$L$};
\draw [fill=ududff] (21.,-5.) circle (2.5pt);
\draw[color=ududff] (21.168790726448727,-4.546281377507604) node {$M$};
\draw [fill=ududff] (21.,-7.) circle (2.5pt);
\draw[color=ududff] (21.168790726448727,-6.531077306642333) node {$N$};
\draw [fill=ududff] (3.,7.) circle (2.5pt);
\draw[color=ududff] (-0.9120639851750981,5.204028624366743) node {$O$};
\draw [fill=ududff] (5.,7.) circle (2.5pt);
\draw[color=ududff] (-0.9120639851750981,5.204028624366743) node {$P$};
\draw [fill=ududff] (7.,7.) circle (2.5pt);
\draw[color=ududff] (-0.9120639851750981,5.204028624366743) node {$Q$};
\draw [fill=ududff] (9.,7.) circle (2.5pt);
\draw[color=ududff] (-0.9120639851750981,5.204028624366743) node {$R$};
\draw [fill=ududff] (11.,7.) circle (2.5pt);
\draw[color=ududff] (-0.9120639851750981,5.204028624366743) node {$S$};
\draw [fill=ududff] (13.,7.) circle (2.5pt);
\draw[color=ududff] (-0.9120639851750981,5.204028624366743) node {$T$};
\draw [fill=ududff] (15.,7.) circle (2.5pt);
\draw[color=ududff] (-0.9120639851750981,5.204028624366743) node {$U$};
\draw [fill=ududff] (17.,7.) circle (2.5pt);
\draw[color=ududff] (-0.9120639851750981,5.204028624366743) node {$V$};
\draw [fill=ududff] (19.,7.) circle (2.5pt);
\draw[color=ududff] (-0.9120639851750981,5.204028624366743) node {$W$};
\draw [fill=ududff] (3.,-9.) circle (2.5pt);
\draw[color=ududff] (3.1815776186652744,-8.540683184891243) node {$Z$};
\draw [fill=ududff] (5.023766397005074,-9.079754324721877) circle (2.5pt);
\draw[color=ududff] (5.253208369699644,-8.553088159448334) node {$A_1$};
\draw [fill=ududff] (7.,-9.) circle (2.5pt);
\draw[color=ududff] (7.238004298834369,-8.478658312105782) node {$B_1$};
\draw [fill=ududff] (9.,-9.) circle (2.5pt);
\draw[color=ududff] (9.24761017708328,-8.478658312105782) node {$C_1$};
\draw [fill=ududff] (11.,-9.) circle (2.5pt);
\draw[color=ududff] (11.232406106218006,-8.478658312105782) node {$D_1$};
\draw [fill=ududff] (13.,-9.) circle (2.5pt);
\draw[color=ududff] (13.242011984466915,-8.478658312105782) node {$E_1$};
\draw [fill=ududff] (15.,-9.) circle (2.5pt);
\draw[color=ududff] (15.226807913601641,-8.478658312105782) node {$F_1$};
\draw [fill=ududff] (17.,-9.) circle (2.5pt);
\draw[color=ududff] (17.23641379185055,-8.478658312105782) node {$G_1$};
\draw [fill=ududff] (19.,-9.) circle (2.5pt);
\draw[color=ududff] (19.24601967009946,-8.478658312105782) node {$H_1$};
\draw[color=black] (-0.7880142396041777,-6.580697204870701) node {$f$};
\draw[color=black] (-0.7880142396041777,-4.571091326621788) node {$g$};
\draw[color=black] (-0.7880142396041777,-3.1817341762274793) node {$h$};
\draw[color=black] (-0.7880142396041777,-1.172128297978568) node {$i$};
\draw[color=black] (-0.7880142396041777,0.8374775802703431) node {$j$};
\draw[color=black] (-0.7880142396041777,2.8222735094050706) node {$k$};
\draw[color=black] (-0.7880142396041777,4.807069438539798) node {$l$};
\draw[color=black] (3.2560074660078264,4.831879387653982) node {$m$};
\draw[color=black] (5.265613344256736,4.831879387653982) node {$n$};
\draw[color=black] (7.250409273391461,4.831879387653982) node {$p$};
\draw[color=black] (9.26001515164037,4.831879387653982) node {$q$};
\draw[color=black] (11.269621029889281,4.831879387653982) node {$r$};
\draw[color=black] (13.254416959024008,4.831879387653982) node {$s$};
\draw[color=black] (15.264022837272917,4.831879387653982) node {$t$};
\draw[color=black] (16.653379987667225,4.831879387653982) node {$a$};
\draw[color=black] (18.662985865916138,4.831879387653982) node {$b$};
\end{scriptsize}
\end{tikzpicture}
}
\caption{The corresponding connected set $\widehat{X}'=\widehat{X}_1\cup\widehat{X}_2'$, where $\widehat{X}_2'$ is obtained via translating the set $\widehat{X}_2$ by one unit to the left. The blue rectangle indicates the edge-boundary of $R(\widehat{X}')$.}
\label{fig:z2connected}
\end{figure}
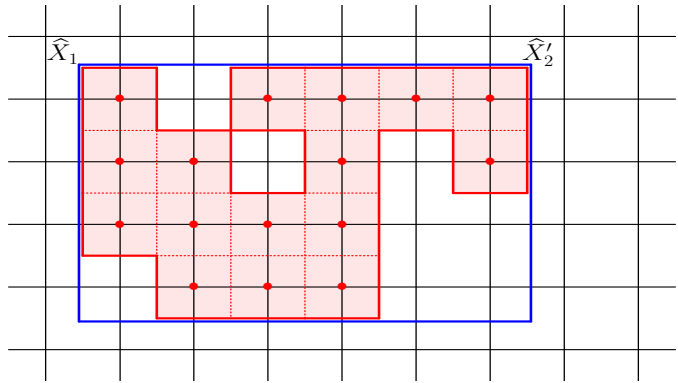}

Now, let $R(\widehat{X})$ denote the smallest rectangular\footnote{With a subset $R(\widehat{X})$ of $\Z^2$ being rectangular, we mean that there exist intervals $I_1,I_2\subset\R$ such that $R(\widehat{X})=(I_1\cap\Z)\times(I_2\cap\Z).$} subset of $\Z^2
$ that contains $\widehat{X}$ and observe that 
\begin{equation} \label{eq:surfestim}
S(\widehat{X})\geq S(R(\widehat{X})) \quad \mbox{(i.e., Per$(\widehat{X})\geq\;$Per$(R(\widehat{X}))$)}
\end{equation}
as well as 
\begin{equation}\label{eq:volestim}
|\widehat{X}|\leq |R(\widehat{X})| \quad \mbox{(i.e., Area$(\widehat{X})\leq\;$Area$(R(\widehat{X}))$)}\:,
\end{equation}
for which we visualized a special case in Figure \ref{fig:z2connected}. Note that for \eqref{eq:surfestim}, we are using that $\widehat{X}$ is connected. There are now two possibilities: either $\mathcal{U}(R(\widehat{X}))$ is a square or it is not. If $\mathcal{U}(R(\widehat{X}))$ is a square, its area must be bigger than $N$ since $\widehat{X}$ was assumed to be a non-quadratic droplet configuration and is thus a proper subset of $R(\widehat{X})$ with $|\widehat{X}|=N$. This immediately implies that $\mbox{Per}(R(\widehat{X}))>\mbox{Per}(Q_x)=4\sqrt{N}$ and thus by \eqref{eq:surfestim}, we get that 
\begin{equation} 
S(\widehat{X})\geq S(R(\widehat{X}))>S(Q_x)=4\sqrt{N}\:,
\end{equation}
which is a contradiction since $\widehat{X}$ was assumed to be an element of $\mathcal{V}_{N,1}$. 

If $\mathcal{U}(R(\widehat{X}))$ is a non-square rectangle with $\mbox{Area}(R(\widehat{X}))\geq \mbox{Area}(\widehat{X})=N$, then an easy calculus exercise to shows that 
\begin{equation} 
\mbox{Per}(R(\widehat{X}))>\mbox{Per}(Q_x)\:.
\end{equation}
By \eqref{eq:surfestim} we again get the contradiction 
\begin{equation} 
S(\widehat{X})\geq S(R(\widehat{X}))>S(Q_x)=4\sqrt{N}\:.
\end{equation}
This shows the lemma.
\end{proof}

\subsection{Classical droplets on the strip} \label{sec:stripdrop}

Let $\mathcal{G}$ be now the strip of width $M$ as described in Section \ref{sec:strip}. For sufficiently large particle number that are multiples of $M$, let us now show that the classical droplets are given by rectangular configurations $\mathcal{R}_{z,\ell}$ (cf.\ \eqref{eq:gluehwein}).

\begin{lemma} \label{lemma:strip}
Let $\mathcal{G}$ be the strip of width $M$. Let  $N$ is a multiple of $M$, i.e.\ $N=\ell M$, for some $\ell\in\N$ and assume that $\ell>\frac{M}{2}$. Then $S_{N,min}=2M$ and 
\begin{equation}
\mathcal{V}_{N,1}=\{\mathcal{R}_{z,\ell}: z \in \Z\}\:.
\end{equation}
\end{lemma}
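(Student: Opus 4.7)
The plan is to project onto rows and reduce to a one-dimensional question about non-negative integer sequences. Write vertices of the strip as $(x,m)$ with $x \in \Z$ and $m \in \{1,\ldots,M\}$, and for any finite $X \subset \Z \times \{1,\ldots,M\}$ let $X_m := \{x \in \Z : (x,m) \in X\}$ and $n_m := |X_m|$, so $\sum_{m=1}^M n_m = \ell M$. Let $k := |\{m : n_m > 0\}|$. Decomposing $\partial X$ into horizontal edges (inside a row) and vertical edges (between consecutive rows), I would write
\begin{equation*}
S(X) \;=\; \sum_{m=1}^M H^m + \sum_{m=1}^{M-1} W_m,
\end{equation*}
where $H^m \ge 2$ whenever $X_m \neq \emptyset$ (with equality iff $X_m$ is an interval of $\Z$) and $W_m = |X_m \triangle X_{m+1}| \ge |n_m - n_{m+1}|$ (with equality iff $X_m$ and $X_{m+1}$ are nested). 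These two ingredients give the key bound
\begin{equation*}
S(X) \;\ge\; 2k + \sum_{m=1}^{M-1}|n_m - n_{m+1}|.
\end{equation*}

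Next I would split into two cases. If $k = M$, the bound already yields $S(X) \ge 2M$; equality forces every $|n_m - n_{m+1}|$ to vanish, hence $n_m = \ell$ for all $m$, and the equality conditions in the row-wise estimates force the $X_m$ to be nested intervals of common length $\ell$, which must therefore all coincide. This gives $X = \mathcal{R}_{z,\ell}$ for some $z \in \Z$.

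The case $k < M$ is the main obstacle, and it is exactly where the hypothesis $\ell > M/2$ enters. The idea is that an empty row forces large total variation: if $n_{m_0} = 0$ and $m^*$ attains $n_{\max} := \max_m n_m$, then telescoping the signed differences between $m^*$ and $m_0$ gives
\begin{equation*}
\sum_{m=1}^{M-1}|n_m - n_{m+1}| \;\ge\; |n_{m^*} - n_{m_0}| \;=\; n_{\max},
\end{equation*}
while pigeonhole among the $k$ non-zero terms yields $n_{\max} \ge \ell M / k$. Therefore $S(X) \ge 2k + \ell M / k$, and the quadratic $2k^2 - 2Mk + \ell M$ has discriminant $4M(M - 2\ell)$, which is strictly negative under the assumption $\ell > M/2$. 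Hence $2k + \ell M / k > 2M$ for every integer $k \in \{1,\ldots,M-1\}$, and so $S(X) > 2M$ strictly in this case.

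Together the two cases establish both $D_{N,\min} = 2M$ and the equality set $\mathcal{V}_{N,1} = \{\mathcal{R}_{z,\ell} : z \in \Z\}$. As a sanity check, the hypothesis is sharp: at $\ell = M/2$ a half-strip rectangular slab attains the same boundary $2M$, which is precisely why the strict inequality $\ell > M/2$ is indispensable in the quadratic estimate above, and is also what I expect to be the most delicate part of writing the argument carefully.
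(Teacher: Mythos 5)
Your proof is correct, and it takes a genuinely different route from the paper's. You reduce the isoperimetric question to a one-dimensional one by projecting onto rows, splitting $\partial X$ into horizontal boundary edges within each row (at least $2$ per nonempty row, by the edge-isoperimetric inequality on $\Z$) and vertical boundary edges between consecutive rows (at least $|n_m-n_{m+1}|$, with equality under nesting), and then handling the case of an empty row by telescoping plus a discriminant computation on $2k^2-2Mk+\ell M$. The paper instead argues geometrically: it first asserts that a minimizer must be connected and attached to the floor or ceiling of the strip (via translation and reflection arguments), then compares $\widehat{X}$ to its bounding rectangle $R(\widehat{X})$ using $S(\widehat{X})\ge S(R(\widehat{X}))$ and $|\widehat{X}|\le|R(\widehat{X})|$, and finishes with a ``calculus exercise'' on perimeters of rectangles. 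Your row-projection argument buys you a self-contained, purely combinatorial proof that never needs to establish connectedness or floor/ceiling attachment a priori --- the inequality $S(X)\ge 2k+\sum_m|n_m-n_{m+1}|$ simply holds for every finite $X$, and all structural information (connected rows, nested rows, equal lengths) drops out of the equality analysis at once. The paper's route more directly mirrors the $\Z^2$ argument of Lemma~\ref{lemma:square} and keeps the geometric picture of droplets spanning the strip, but it delegates several steps to informal translation/reflection reasoning and to an unspecified calculus estimate. Both approaches pinpoint $\ell>M/2$ as the sharp threshold, exactly as you note in your sanity check at $\ell=M/2$.
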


The reason for including the assumption $\ell> \frac{M}{2}$ is that for smaller $\ell$ the minimizers will be rectangles attached to the floor or ceiling, but not both. For particle numbers $N$ which are not multiples of $M$ it is quite convincing that minimizing configurations will be found by attaching surplus particles in an extra column at one of the sides of a region of the form $R_{z,\ell}$, starting from either the ceiling or the floor of the strip. But we haven't worked out a proof for this and don't know a reference.

\begin{proof}
Firstly, note that for any $z\in\Z$, we have that 
\begin{equation} 
|\mathcal{R}_{z,\ell}|=kM=N\quad\text{and}\quad S\left(\mathcal{R}_{z,\ell}\right)=2M\:.
\end{equation}
Now, for a proof by contradiction, assume that there exists a configuration $\widehat{X}\in\mathcal{V}_N$ such that $S(\widehat{X})\leq 2M$ and which is not of the form $\widehat{X}= \mathcal{R}_{z,\ell}$ for some $z\in\Z$. 

Now, note that any configuration $\widehat{X}\in\mathcal{V}_{N,1}$ will be a connected set that is attached to the ``floor" or the ``ceiling" of the strip, i.e.\ there exists $x=(z,m)\in\widehat{X}$ such that $m=1$ or $m=M$ respectively. For if $\widehat{X}$ is not attached to either floor or ceiling, it could be translated in $m$-direction until it is, which would reduce its edge surface by a strictly positive number.

Likewise, it is impossible that $\widehat{X}$ is not connected, which follows from an argument similar as in the proof of Lemma \ref{lemma:square}. The only situation that requires extra care is the case that $\widehat{X}$ is not connected and could be decomposed into two disjoint non-connected sets $\widehat{X}=\widehat{X}_1\cup\widehat{X}_2$ such that $\widehat{X}_1$ is attached to the floor while $\widehat{X}_2$ is attached to the ceiling. In this case, however, we can reflect $\widehat{X}_2$ about the $m=\frac{M+1}{2}$ axis to obtain a configuration $\widehat{X}_2'$ with same edge-surface as $\widehat{X}_2$ (cf.\ Figure~\ref{fig:B3}) which is attached to the floor as well. By translation of $\widehat{X}_2'$ in $z$-direction, we obtain a set $\widehat{X}_2''$ such that $\widehat{X}_1$ and $\widehat{X}_2''$ are disjoint but connected and thus $\widehat{X}''=\widehat{X}_1\cup\widehat{X}_2''$ has a strictly smaller edge boundary than $\widehat{X}=\widehat{X}_1\cup\widehat{X}_2$ (cf.\ Figure~\ref{fig:B3}).

\definecolor{ffqqqq}{rgb}{1.,0.,0.}
\definecolor{qqqqff}{rgb}{0.,0.,1.}
\definecolor{qqffqq}{rgb}{0.,1.,0.}
\definecolor{zzttqq}{rgb}{0.6,0.2,0.}

\begin{figure}[h]
\centering
\resizebox{9cm}{4cm}
{
\begin{tikzpicture}[line cap=round,line join=round,>=triangle 45,x=1.0cm,y=1.0cm, scale=1]
\clip(-1.,2.898968474833424) rectangle (19.004906413331508,11.089205234803678);
\fill[line width=2.pt,color=zzttqq,fill=zzttqq,fill opacity=0.10000000149011612] (2.,3.) -- (2.,6.) -- (1.,6.) -- (1.,7.) -- (6.,7.) -- (6.,5.) -- (4.,5.) -- (4.,3.) -- cycle;
\fill[line width=2.pt,color=qqffqq,fill=qqffqq,fill opacity=0.10000000149011612] (17.000000033211688,10.99974227266931) -- (17.,9.) -- (15.,9.) -- (15.,8.) -- (13.,8.) -- (13.,10.) -- (14.,10.) -- (14.000000083029223,10.999355681673276) -- cycle;
\fill[line width=2.pt,color=qqqqff,fill=qqqqff,fill opacity=0.10000000149011612] (17.,3.) -- (17.,5.) -- (15.,5.) -- (15.,6.) -- (13.,6.) -- (13.,4.) -- (14.,4.) -- (14.,3.) -- cycle;
\fill[line width=2.pt,color=ffqqqq,fill=ffqqqq,fill opacity=0.10000000149011612] (6.,6.) -- (8.,6.) -- (8.,5.) -- (10.,5.) -- (10.,3.) -- (7.,3.) -- (7.,4.) -- (6.,4.) -- cycle;
\draw [line width=5.2pt,domain=-1.:19.004906413331508] plot(\x,{(--39.-0.*\x)/13.});
\draw [line width=5.2pt,domain=-1.:19.004906413331508] plot(\x,{(--196.35003668836413--0.0023007289968681732*\x)/17.853977321754964});
\draw [line width=2.pt,color=zzttqq] (2.,3.)-- (2.,6.);
\draw [line width=2.pt,color=zzttqq] (2.,6.)-- (1.,6.);
\draw [line width=2.pt,color=zzttqq] (1.,6.)-- (1.,7.);
\draw [line width=2.pt,color=zzttqq] (1.,7.)-- (6.,7.);
\draw [line width=2.pt,color=zzttqq] (6.,7.)-- (6.,5.);
\draw [line width=2.pt,color=zzttqq] (6.,5.)-- (4.,5.);
\draw [line width=2.pt,color=zzttqq] (4.,5.)-- (4.,3.);
\draw [line width=2.pt,color=qqffqq] (17.000000033211688,10.99974227266931)-- (17.,9.);
\draw [line width=2.pt,color=qqffqq] (17.,9.)-- (15.,9.);
\draw [line width=2.pt,color=qqffqq] (15.,9.)-- (15.,8.);
\draw [line width=2.pt,color=qqffqq] (15.,8.)-- (13.,8.);
\draw [line width=2.pt,color=qqffqq] (13.,8.)-- (13.,10.);
\draw [line width=2.pt,color=qqffqq] (13.,10.)-- (14.,10.);
\draw [line width=2.pt,color=qqffqq] (14.,10.)-- (14.000000083029223,10.999355681673276);
\draw [line width=2.pt,color=qqqqff] (17.,3.)-- (17.,5.);
\draw [line width=2.pt,color=qqqqff] (17.,5.)-- (15.,5.);
\draw [line width=2.pt,color=qqqqff] (15.,5.)-- (15.,6.);
\draw [line width=2.pt,color=qqqqff] (15.,6.)-- (13.,6.);
\draw [line width=2.pt,color=qqqqff] (13.,6.)-- (13.,4.);
\draw [line width=2.pt,color=qqqqff] (13.,4.)-- (14.,4.);
\draw [line width=2.pt,color=qqqqff] (14.,4.)-- (14.,3.);
\draw [line width=2.pt,color=ffqqqq] (6.,6.)-- (8.,6.);
\draw [line width=2.pt,color=ffqqqq] (8.,6.)-- (8.,5.);
\draw [line width=2.pt,color=ffqqqq] (8.,5.)-- (10.,5.);
\draw [line width=2.pt,color=ffqqqq] (10.,5.)-- (10.,3.);
\draw [line width=2.pt,color=ffqqqq] (7.,3.)-- (7.,4.);
\draw [line width=2.pt,color=ffqqqq] (7.,4.)-- (6.,4.);
\draw [line width=2.pt,color=ffqqqq] (6.,4.)-- (6.,6.);
\draw [line width=2.pt] (0.95,3.)-- (0.95,7.05);
\draw [line width=2.pt] (0.95,7.05)-- (10.05,7.05);
\draw [line width=2.pt] (10.05,7.05)-- (10.05,3.);
\draw (14.873548047682794,10.12884561471867) node[anchor=north west] {\Large $\widehat{X}_2$};
\draw (14.49302819821515,4.82087571639292) node[anchor=north west] {\Large $\widehat{X}_2'$};
\draw (7.933590793106227,4.394888001731736) node[anchor=north west] {\Large $\widehat{X}_2''$};
\draw (3.548552527812416,6.160567184556092) node[anchor=north west] {\Large $\widehat{X}_1$};
\draw (8.332230635405663,7.964526589594877) node[anchor=north west] {\Large ${R}(\widehat{X}'')$};
\begin{scriptsize}
\draw[color=black] (-3.373284733932359,3.306668313548758) node {$f$};
\draw[color=black] (-3.373284733932359,10.86270532440627) node {$g$};
\end{scriptsize}
\end{tikzpicture}
}
\caption{Disconnected $\widehat{X}=\widehat{X}_1\cup\widehat{X}_2$ with $\widehat{X}_1$) attached to the floor and $\widehat{X}_2$ attached to the ceiling.  ${R}(\widehat{X}'')$ is the smallest rectangle containing $\widehat{X}''=\widehat{X}_1\cup\widehat{X}_2''$.} 
\label{fig:B3}
\end{figure}
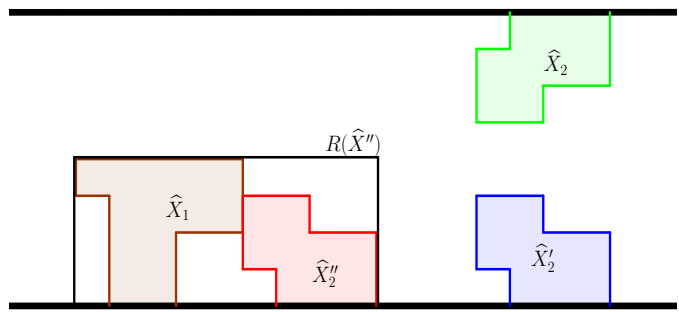

Hence, without loss of generality, assume from now on that $\widehat{X}$ is a connected configuration that is attached to the floor. As in the proof of Lemma \ref{lemma:square}, let ${R}(\widehat{X})\subset\mathcal{V}$ be the smallest rectangular configuration containing $\widehat{X}$ (compare Figure~\ref{fig:B3}) and note that since $\widehat{X}$ is connected, we have $S(\widehat{X})\geq S(R(\widehat{X}))$
and, trivially, $N=|\widehat{X}|\leq |R(\widehat{X})|$. While $R(\widehat{X})$ itself always has to be attached to the floor, it is now either possible that it is attached to the ceiling or not -- depending on whether $\widehat{X}$ is. Let us firstly treat the case that $\widehat{X}$ and therefore $R(\widehat{X})$ are not attached to the ceiling. In this case, using that $N=\ell M>\frac{M^2}{2}$, it is a simple calculus exercise to show that $S(R(\widehat{X}))=
\mbox{Per}(R(\widehat{X}))>2M$ and thus we get $S(\widehat{X})\geq S(R(\widehat{X}))>2M$ -- a contradiction to $\widehat{X}\in\mathcal{V}_{N,1}$. 

The case that remains to be treated is when the $\widehat{X}$ is attached to both -- floor and ceiling. Since $\widehat{X}$ was assumed to be non-rectangular, it is clear that $\partial(\mathcal{U}(\widehat{X}))$ cannot just be given by two straight vertical lines that connect floor and ceiling, which means that $\mbox{Per}(\widehat{X})=S(\widehat{X})$ has to be strictly larger than $2M$. This finishes the proof. 	

\end{proof}

\end{appendix}

\end{document}